\documentclass[11pt]{article}

\usepackage[margin=1in]{geometry}

\usepackage{amsmath,amssymb,amsthm}
\usepackage{bm}
\usepackage{graphicx}
\usepackage{hyperref}
\usepackage{algorithmicx}
\usepackage{algpseudocode}

\newtheorem{theorem}{Theorem}

\newtheorem{corollary}{Corollary}

\newcommand{\R}{\mathbb{R}}

\newcommand{\Cov}{\mathrm{Cov}}

\newcommand{\matern}{Mat\'ern}

\newcommand{\wt}{\widetilde}

\DeclareMathOperator*{\argmin}{arg\,min}
\DeclareMathOperator*{\argmax}{arg\,max}

\usepackage{setspace}

\usepackage[round]{natbib}

\begin{document}
\vspace*{24pt}

\begin{center}
{\LARGE Permutation and Grouping Methods for}
\vspace{6pt}

{\LARGE Sharpening Gaussian Process Approximations}
\vskip18pt

{\large Joseph Guinness}
\vskip18pt

{\large \textit{North Carolina State University, Department of Statistics}}
\end{center}
\vspace{2pt}

\abstract{Vecchia's approximate likelihood for Gaussian process parameters depends on how the observations are ordered, which has been cited as a deficiency. This article takes the alternative standpoint that the ordering can be tuned to sharpen the approximations. Indeed, the first part of the paper includes a systematic study of how ordering affects the accuracy of Vecchia's approximation. We demonstrate the surprising result that random orderings can give dramatically sharper approximations than default coordinate-based orderings. Additional ordering schemes are described and analyzed numerically, including orderings capable of improving on random orderings. The second contribution of this paper is a new automatic method for grouping calculations of components of the approximation. The grouping methods simultaneously improve approximation accuracy and reduce computational burden. In common settings, reordering combined with grouping reduces Kullback-Leibler divergence from the target model by more than a factor of 60 compared to ungrouped approximations with default ordering. The claims are supported by theory and numerical results with comparisons to other approximations, including tapered covariances and stochastic partial differential equations. Computational details are provided, including the use of the approximations for prediction and conditional simulation. An application to space-time satellite data is presented.}

\section{Introduction}

The Gaussian process model has become very popular for the analysis of time series, functional data, spatial data, spatial-temporal data, and the output from computer experiments. Likelihood-based methods for estimating Gaussian process covariance parameters were popularized by the work of \cite{mardia1984maximum}, but it was quickly realized that the $O(n^2)$ memory and $O(n^3)$ flop burden made these methods infeasible for large datasets, necessitating the use of computationally efficient approximations. This article studies one of the earliest such approximations, due to \cite{vecchia1988estimation}, which has a number of advantages. First, the approximation is embarassingly parallel, making it a good candidate for implementation on high performance computing systems. Second, the approximation corresponds to a valid multivariate normal distribution, meaning that the approximation can be used to generate ensembles of conditional simulations to characterize joint uncertainties in predictions. Third, the approximation enjoys the desirable statistical property that maximizing it corresponds to solving a set of unbiased estimating equations \citep{stein2004approximating}. Lastly, the approximation is very accurate; we demonstrate that with the improvements described in this article, it far outperforms state-of-the-art methods such as stochastic partial differential equation (SPDE) approximations \cite{lindgren2011explicit} and covariance tapering \citep{furrer2006covariance,kaufman2008covariance}.

 Suppose that the $n$ observation locations are labeled 1 through $n$, as in $x_1,\ldots,x_n \in \R^d$. Define $y_i := y(x_i) \in \R$ to be the observation associated with location $x_i$ and the vector $y = (y_1,\ldots,y_n)$. In a Gaussian process, the data $y$ are modeled as $Y \sim N(\mu,\Sigma_\theta)$, where $\Sigma_\theta$ is a covariance matrix with $(i,j)$ entry determined by covariance function $K_\theta(x_i,x_j)$ depending on a vector of covariance parameters $\theta$. Let $\tau: \{1,\ldots,n\} \to \{1,\ldots,n\}$ be a permutation of the integers 1 through $n$, and define the permuted vector $y^\tau$, where $y^\tau_i = y_{\tau(i)}$. For any permutation $\tau$, the joint density for the observations can be written as a product of conditional densities
\begin{align*}
p_\theta(y_1,\ldots,y_n) = p_\theta(y^\tau_1) \prod_{i=2}^n p_\theta(y^\tau_i|y^\tau_1,\ldots,y^\tau_{i-1}),
\end{align*}
reflecting the invariance of the joint density to arbitrary relabeling of the observations. Vecchia's approximation replaces the complete conditioning vectors $(y^\tau_1,\ldots,y^\tau_{i-1})$, with a subvector. Specifically, let $\{j_{i1},\ldots,j_{im_i}\}$ be a set of integers between $1$ and $i-1$, and define the approximation
\begin{align}\label{vecchiaapprox}
p_{\theta,\tau,J}(y_1,\ldots,y_n) = p_\theta(y^\tau_1) \prod_{i=2}^n p_\theta(y^\tau_i|y^\tau_{j_{i1}},\ldots,y^\tau_{j_{im_i}}),
\end{align}
where $J = \{J_1,\ldots,J_n\}$ is collection of sets $J_i := \{j_{i1},\ldots,j_{im_i},i\}$, which we refer to as the neighbors of observation $i$ (by convention every observation neighbors itself).  Since Vecchia's approximation is defined as an ordered sequence of valid conditional distributions, $p_{\theta,\tau,J}$ forms a valid joint distribution. Also, its definition as a product of conditional distributions allows the approximation to be easily parallelized.

Clearly, the quality of the approximation depends on the choice of $J$ since the approximate likelihood differs from the exact likelihood if any $m_i \neq i-1$. The quality of the approximation also depends on the permutation $\tau$, a fact acknowledged in the literature, but not yet carefully explored until now. \cite{banerjee2014hierarchical} outline some of the common criticisms of Vecchia's approximation,
\begin{quotation}
``However, the approach suffers many problems. First, it is not formally defined. Second, it will typically be sequence dependent, though there is no natural ordering of the spatial locations. Most troubling is the arbitrariness in the number of and choice of `neighbors.' Moreover, perhaps counter-intuitively, we can not merely select locations close to $\bm{s}_i$ (as we would have with the full data likelihood) in order to learn about the spatial decay in dependence for the process. So, altogether, we do not see such approximations as useful approach.''
\end{quotation}
This paper takes the stance that viewing order dependence as an aspect of the approximation that can be tuned--rather than as an inherent deficiency--is a fruitful avenue for improving the approximation. \cite{datta2016hierarchical} conducted a simulation study to compare three coordinate-based orderings and concluded that the inferences are ``extremely robust to the ordering of the locations.'' On the contrary, default orderings based on sorting the locations on a coordinate are often badly suboptimal compared to even a completely random ordering of the points. On two-dimensional domains, the more carefully constructed maximum minimum distance ordering is a further improvement and can achieve greater than 99\% relative efficiency for estimating covariance parameters with as few as 30 neighbors, chosen to be simply the 30 nearest previous points. These results address most of the concerns raised by \cite{banerjee2014hierarchical}. Order dependence can be exploited to improve the approximation. For maximum minimum distance ordering, a simple choice of nearest neighbors is effective for this ordering. We prove that the quality of the model approximation is nondecreasing as the number of neighbors increases, so that the choice of the number of neighbors is governed by a natural tradeoff between computational efficiency and model approximation.

In addition to the results on orderings, this paper describes how the approximation can be computed more efficiently, both in terms of memory burden and floating point operations, by grouping the observations into blocks and evaluating each group's contribution to the likelihood simultaneously. The grouped version of the approximation is particularly interesting because not only does it reduce the computational burden, but it is provably guaranteed to improve the model approximation. A computationally efficient algorithm for grouping the observations is described and implemented, and numerical studies supporting the theoretical results are presented.

The consideration of arbitrary orderings presents some computatational issues that are not problematic for coordinate-based orderings. First, obviously, is the issue of how to find the orderings. The maximum minimum distance ordering requires $O(n^3)$ floating point operations. To address this issue, we introduce an $O(n\log n)$ algorithm for finding an ordering that mimics the salient features of the maximum minimum distance ordering. Second, a naive search for ordered nearest neighbors requires $O(n^2 \log n)$ flops. There are well-known computationally efficient methods for finding nearest neighbors, and this paper describes an adaptation of those methods to the case when the neighbors must come from earlier in the ordering. We also describe methods for profiling out linear mean parameters and using the approximation for spatial prediction and for efficiently drawing from the conditional distribution at a set of unobserved locations given the data, which is a useful way to quantify joint uncertainty in predicted values.

Several articles have investigated properties and extensions of Vecchia's approximation. \cite{pardo1997amle3d} describe Fortran 77 software for computing the approximate likelihood and compare a random ordering versus a sorted coordinate ordering on a small dataset of size 41. \cite{stein2004approximating} describe how Vecchia's approximation can be extended to residual/restricted maximum likelihood estimation, how blocking can be used to speed the computations, and how conditioning on nearest neighbors can be suboptimal when points are ordered according to the coordinates. \cite{sun2016statistically} use Vecchia's approximation to define several unbiased estimating equations for covariance parameters. \cite{datta2016hierarchical} use Vecchia's approximation as part of a hierarchical Bayesian specification of spatial processes, and \cite{datta2016nearest} discuss interpretation of the approximation as an approximation to the inverse Cholesky factor of the covariance matrix and apply it to multivariate spatial data. Vecchia's approximation has been used in various applications, including for seismic data \citep{eide2002seismic} and space-time SPDE models \citep{jones1997models}.

The paper is organized as follows. Section \ref{permutationdesigns} outlines formal definitions for the orderings. Section \ref{grouping} presents the computational and theoretical results related to grouping the observations, establishing that grouping observations simultaneously reduces computational effort and improves the model approximations. Section \ref{furthercomputational} addreses additional computational issues described above. Section \ref{numerical} contains numerical and timing experiments studying the relative efficiencies for various orderings, Kullback-Leibler divergence for various orderings and for other proposed Gaussian process approximations, and timing results. Section \ref{datasection} includes an application of the methods to space-time satellite data, and the paper concludes with a discussion in Section \ref{discussion}.

\section{Definitions of Orderings}\label{permutationdesigns}

In the numerical analysis literature on sparse matrix factorizations, it is widely recognized that row-column reordering schemes for large sparse symmetric positive definite matrices are essential for increasing the sparsity of the Cholesky factor \citep{saad2003iterative}. Finding an optimal such ordering is an NP-complete problem \citep{yannakakis1981computing}, and so the algorithms in use are necessarily heuristic, but heuristic algorithms, such as the approximate minimum degree algorithm \citep{amestoy1996approximate}, have proven to be effective. The goals in Gaussian process approximation are related in that we search for an ordering that produces an approximately sparse inverse Cholesky factor. However, the problem at hand here is more difficult and perhaps less well-defined than sparse matrix factorization; our task is to find the best reordering of observations that produces accurate approximations to the joint density or an approximate likelihood function that delivers efficient parameter estimates, a criterion that depends on the derivative of the approximate and exact likelihood functions with respect to the covariance parameters. Thus, it is extremely unlikely that we will be able to find ``optimal'' orderings for large datasets. Nevertheless, as in the sparse matrix case, this paper shows that heuristically motivated orderings can offer significant improvements in statistical efficiency and model approximation over default orderings.

It appears that the default choice for Vecchia's approximation is to order the points by sorting on one of the coordinates. This is the ordering used in \cite{vecchia1988estimation}, \cite{sun2016statistically}, and \cite{datta2016hierarchical}. \cite{stein2004approximating} use an ordering based on sorting the points on the sum of their coordinates, which is equivalent to ordering on one of the coordinates in a system rotated by $\pi/4$. We refer to such orderings as \textit{sorted coordinate} orderings. Sorted coordinate orderings are based on a heuristic from one-dimensional examples that each location can be separated by previous locations in the ordering by its nearest neighbors.

The numerical studies in Section \ref{numerical} indicate that the following ordering scheme is effective for Mat\'ern covariance models in two dimensions. This ordering selects a point in the center first--the center being the mean location or some other measure of centrality, generically denoted as $\overline{x}$--then sequentially picks the next point to have maximum minimum distance to all previously selected points, that is
\begin{align*}
\tau(1) &= \argmin_{i \in 1,\ldots, n} \| x_i - \overline{x} \|, \\
\tau(j) &= \argmax_{i \notin \tau(1),\ldots,\tau(j-1) } \min_{ k \in 1,\ldots,j-1 } \| x_i -  x_{\tau(k)} \|, \quad  j > 1 .
\end{align*}
The result is that for every $k=1,\ldots,n$, the first $k$ points form a space-covering set, none of which are too near each other. We refer to this ordering as a \textit{maximum minimum distance} (MMD) ordering, and any approximation to it as an approximate MMD (AMMD) ordering. The MMD ordering is based on a heuristic of making sure that each location is surrounded by previous locations in the ordering.

The numerical studies in more than two dimensions indicate that it can be beneficial to sort the points by their distance to some point in the domain. For example, the points can be sorted based on their distance to $\overline{x}$. We use \textit{middle out} ordering to refer to sorting based on distance to the center. The heuristic for middle out is similar to the sorted coordinate heuristic, with the difference that previous points fall inside a sphere with radius smaller than the radius of the current point. Finally, a \textit{completely random} ordering is a draw from the uniform distribution on all permutations. Random orderings do not have a heuristic but tend to give orderings with the same surrounding heuristic as MMD and outperform sorted coordinate orderings in many cases. Examples of the four orderings are given in Figure \ref{orderplots}.

\begin{figure}
\centering
\includegraphics[width=0.8\textwidth]{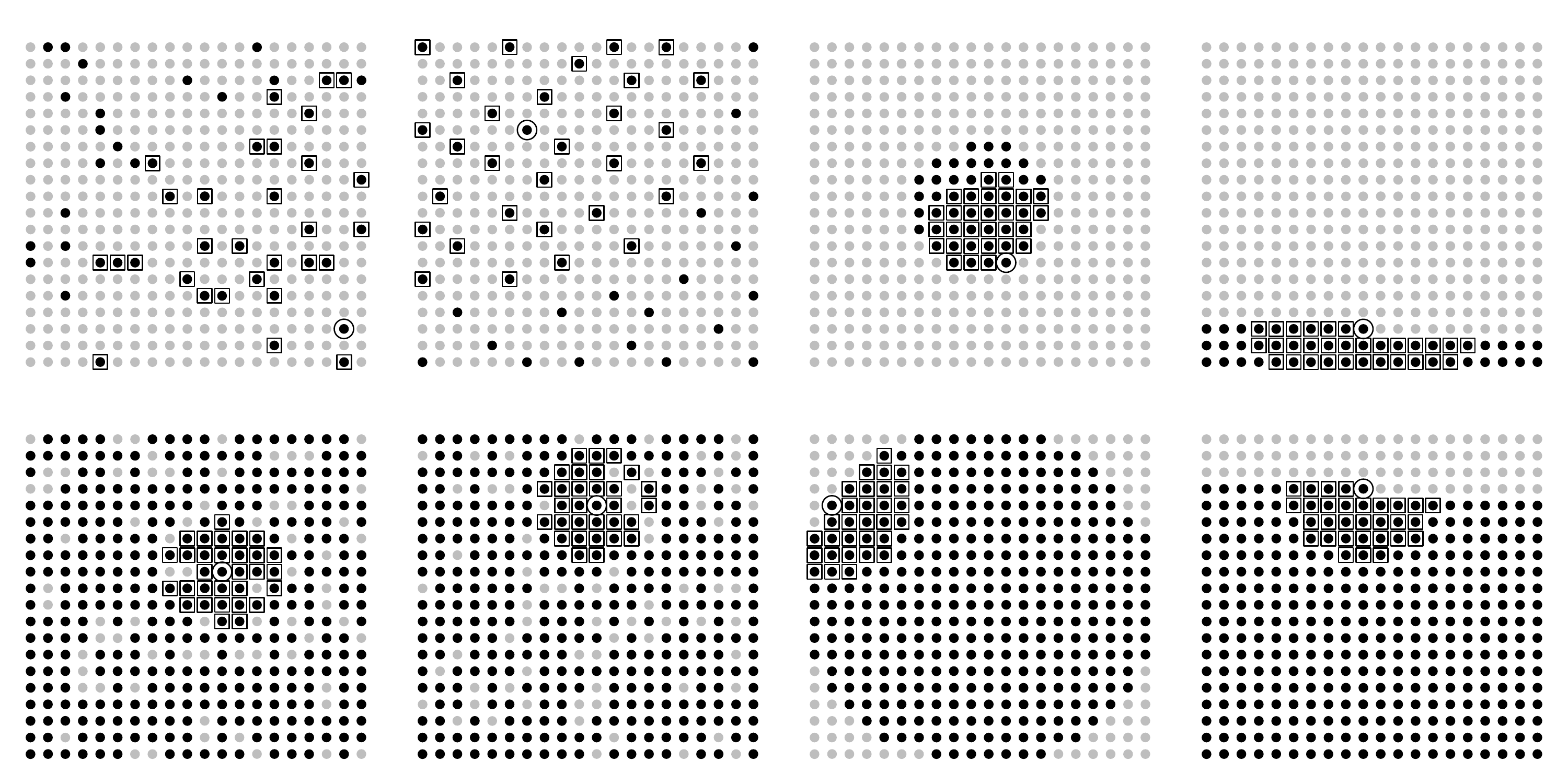}
\caption{{\small Examples of four orderings of 400 locations on a $20 \times 20$ grid. From left to right are completely random, maximum minimum distance, middle out, and sorted on vertical coordinate. On the top row, the black circles are ordered 1-50, circled point is number 50, and squared points are 30 nearest neighbors to point 50 among previous points. Bottom row is the same for point ordered number 330.}}
\label{orderplots}
\end{figure}


\section{Automatic Grouping Methods}\label{grouping}

In this section, a grouped version of Vecchia's approximation is described and proven to be no worse (and possibly better) than its ungrouped counterpart in terms of Kullback-Leibler (KL) divergence. Moreover, this section outlines how an entire group's contribution to the likelihood can be computed simultaneously, and it is demonstrated that if the grouping is chosen well, the simultaneous computation is lighter, both in terms of floating point operations and memory, which is important for implementation on shared memory parallel systems. Essentially, grouping gives us better approximations for free.

\cite{stein2004approximating} considered a blockwise version of Vecchia's likelihood and demonstrated that blocking can reduce the computational burden. The grouping methods developed here differ in that they operate on the ordering and neighbor sets after they have been defined. This gives us the freedom to first choose an advantangeous ordering and neighbor sets, and then use grouping to reduce the computational effort and improve the model accuracy. The grouping scheme here is also more general in that observations within a group are not required to be ordered continguously, nor are observations within one group required to condition on all observations in another group.

Let $B = \{B_1,\ldots,B_K\}$ be a partition of $\{1,\ldots,n\}$, representing a grouping of the observations into blocks, and define $U_k = \bigcup_{b \in B_k} J_b$,
the union of all neighbors of all indices $B_k$, representing the set of all neighbors of all observations in block $B_k$. Suppose that $i \in B_k$, and define $\overline{J}_i$ as follows:
\begin{align*}
\overline{J}_i = \{j \in U_k : j \leq i\},
\end{align*}
so that $\overline{J}_i$ is the union of all neighbors of the observations with index in $B_k$, subject to the requirement that all elements of $\overline{J}_i$ must be less than or equal to $i$. Thus $\overline{J}_i$ is a set of integers between 1 and $i$ that includes $i$. Therefore $\overline{J}_i$ conforms to our convention and $p_{\theta,\tau,\overline{J}}$ is an approximation in the form of \eqref{vecchiaapprox}. Note that $\overline{J}_i$ depends on the original neighbor choices $J_1,\ldots,J_n$ and the partition $B$. I refer to $p_{\theta,\tau,\overline{J}}$ as the grouped version of the ungrouped approximation $p_{\theta,\tau,J}$. Figure \ref{groupingfigure} gives an example of how $\overline{J}_i$ is constructed for a group containing two points.

\begin{figure}
\centering
\includegraphics[width=0.8\textwidth]{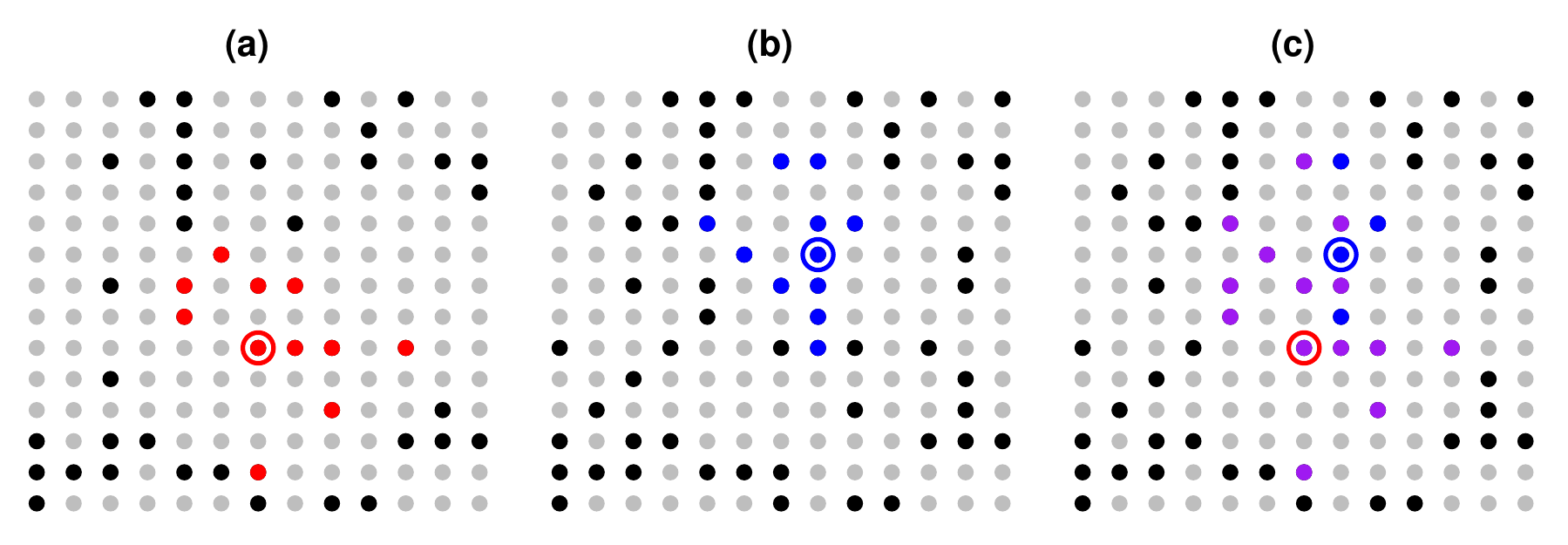}
\caption{{\small Example with $B_1 = \{45,60\}$. (a) Point 45 (circled), points 1 through 45 (black or red), $J_{45}$ (red). (b) Point 60 (circled), points 1 through 60 (black or blue), $J_{60}$ (blue). (c) Points 1 through 60 (not gray), $\overline{J}_{45}$ (purple), $\overline{J}_{60}$ (purple + blue). $\overline{J}_{45}$ has three more points than $J_{45}$, and $\overline{J}_{60}$ has 7 more points than $J_{60}$. }}
\label{groupingfigure}
\end{figure}

\subsection{Improved Model Approximations}

To gain an intuition for why grouping observations improves the accuracy of the model approximation, I briefly review one interpretation of the inverse Cholesky factor of a covariance matrix \citep{pourahmadi1999joint}. Suppose $Y = (Y_1,\ldots,Y_n)^T$ is a mean-zero multivariate normal vector with $E(YY^T) = \Sigma$, which has Cholesky decomposition $\Sigma = LL^T$. Let $\Gamma = L^{-1}$. The transformation $Z = \Gamma Y$ is decorrelating and can be interpreted as follows; for any $i=2,\ldots,n$,
\begin{align*}
Z_i = \sum_{j=1}^i Y_j = \Gamma_{ii} \bigg[ Y_i - \sum_{j=1}^{i-1} \bigg(\frac{-\Gamma_{ij}}{\Gamma_{ii}}\bigg) Y_{j} \bigg],
\end{align*}
is the standardized residual of the projection of $Y_i$ onto $(Y_1,\ldots,Y_{i-1})$, $\Gamma_{ii}$ is the reciprocal of the residual standard deviation, and $-\Gamma_{ij}/\Gamma_{ii}$ are the coefficients of the projection. The coefficients can also be interpreted as those defining the best linear unbiased predictor (BLUP) of $Y_i$ using $(Y_1,\ldots,Y_{i-1})$, an interpretation which can be extended to the case of a linear mean function \citep{stein2004approximating}. Defining $z = \Gamma y$, the conditional density for $Y_i$ given $Y_1,\ldots,Y_{i-1}$ can be written as
\begin{align*}
\log p_\theta(y_i|y_1,\ldots,y_{i-1}) = -\frac{1}{2}\log 2\pi + \log \Gamma_{ii} - \frac{1}{2}z_i^2,
\end{align*}
and thus the $i$th row of $\Gamma$ encodes the conditional density for $Y_i$ given $Y_{1},\ldots,Y_{i-1}$.

Vecchia's approximation replaces $\Gamma$ with a sparse approximation $\wt{\Gamma}$, since $Y_i$ is projected onto a subset of $(Y_1,\ldots,Y_{i-1})$, and thus $\wt{\Gamma}_{ij}$ is zero if $Y_j$ is not in the conditioning set for $Y_i$. Because of the BLUP interpretation, $\wt{\Gamma}_{ii} \leq \Gamma_{ii}$ for every $i$. Taking this a step further, suppose $\wt{\Gamma}^{1}$ and $\wt{\Gamma}^2$ represent two approximations that use the same ordering with different conditioning sets given by $J^1$ and $J^2$. If $J_i^1 $ is contained in $J_i^2$ for every $i$, then $\wt{\Gamma}^{1}_{ii} \leq \wt{\Gamma}^2_{ii}$, because the variance of the BLUP cannot increase when additional variables are added to the conditioning set.

The Kullback-Leibler (KL) divergence from $p_1$ to $p_0$ is defined as $KL(p_0||p_1) = E( \log(p_0/p_1) )$, where the expectation is with respect to $p_0$. The following theorem establishes that the KL divergence is nonincreasing as observations are added to the conditioning sets. This result is needed to show that the grouped approximation is no worse than its ungrouped counterpart (and could be better).

\begin{theorem}\label{KLtheorem}
If $J^1_i \subset J^2_i$ for every $i = 1,\ldots,n$, then $KL(p_\theta || p_{\theta,\tau,J^2}) \leq KL(p_\theta || p_{\theta,\tau,J^1})$.
\end{theorem}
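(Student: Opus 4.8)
The plan is to reduce the comparison of the two divergences to a sum of conditional mutual informations, each of which is manifestly nonnegative. First I would observe that the permutation $\tau$ merely relabels the observations, so without loss of generality I take $\tau$ to be the identity and write the conditioning set of observation $i$ under a neighbor collection $J$ as $C_i := J_i \setminus \{i\} \subset \{1,\ldots,i-1\}$, with $C_1 = \emptyset$. Because the exact density also factors in this order as $p_\theta(y) = \prod_{i=1}^n p_\theta(y_i \mid y_1,\ldots,y_{i-1})$ while the approximation factors as $p_{\theta,\tau,J}(y) = \prod_{i=1}^n p_\theta(y_i \mid y_{C_i})$, the log-likelihood ratio is a sum over $i$, and taking the expectation under $p_\theta$ gives
\begin{align*}
KL(p_\theta || p_{\theta,\tau,J}) = E_{p_\theta}[\log p_\theta(Y)] - \sum_{i=1}^n E_{p_\theta}\big[ \log p_\theta(Y_i \mid Y_{C_i}) \big],
\end{align*}
where the leading constant (the negative differential entropy of the true law) does not depend on $J$.

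Subtracting the two divergences cancels the constant and leaves
\begin{align*}
KL(p_\theta || p_{\theta,\tau,J^1}) - KL(p_\theta || p_{\theta,\tau,J^2}) = \sum_{i=1}^n E_{p_\theta}\bigg[ \log \frac{p_\theta(Y_i \mid Y_{C_i^2})}{p_\theta(Y_i \mid Y_{C_i^1})} \bigg],
\end{align*}
where $C_i^1 \subset C_i^2$ by hypothesis. It therefore suffices to show each summand is nonnegative. Fixing $i$ and writing $A = C_i^1$, $B = C_i^2$, I would condition on $Y_A = v$: since $p_\theta(Y_i \mid Y_B) = p_\theta(Y_i \mid Y_A, Y_{B \setminus A})$, the integrand depends only on $(Y_i, Y_{B \setminus A})$ once $Y_A$ is fixed, so the remaining coordinates integrate out and, with $w$ ranging over values of $Y_{B\setminus A}$,
\begin{align*}
E_{p_\theta}\bigg[ \log \frac{p_\theta(Y_i \mid Y_B)}{p_\theta(Y_i \mid Y_A)} \,\bigg|\, Y_A = v \bigg] = \iint p_\theta(y_i, w \mid v) \log \frac{p_\theta(y_i, w \mid v)}{p_\theta(y_i \mid v)\, p_\theta(w \mid v)}\, dy_i\, dw.
\end{align*}
The inner integral is the KL divergence between the conditional joint law of $(Y_i, Y_{B\setminus A})$ given $Y_A = v$ and the product of its conditional marginals, hence nonnegative for every $v$ by the Gibbs (Jensen) inequality; averaging over $v$ preserves nonnegativity. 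Thus each summand is $\ge 0$, the displayed difference is $\ge 0$, and the theorem follows.

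The step requiring the most care is the bookkeeping that collapses each term to a single conditional mutual information: one must verify that after conditioning on $Y_A$ the log-ratio genuinely involves only $(Y_i, Y_{B\setminus A})$, so the other coordinates drop out, and that absolute integrability (immediate here, since $p_\theta$ is Gaussian with positive-definite conditional covariances) licenses splitting $KL$ into the additive form in the first place. I would also note that Gaussianity is used only to guarantee integrability; the monotonicity of $KL$ divergence under enlarging the conditioning sets holds for any $p_\theta$ with finite conditional mutual informations. A purely Gaussian alternative would insert the closed-form $KL$ between mean-zero normals and appeal to the BLUP monotonicity $\wt{\Gamma}^1_{ii} \le \wt{\Gamma}^2_{ii}$ recorded above, but the information-theoretic route is shorter and avoids matrix computation.
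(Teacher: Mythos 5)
Your proof is correct, but it takes a genuinely different route from the paper's. The paper works entirely inside the Gaussian family: it writes the difference of KL divergences as a difference of quadratic forms plus a difference of log-determinants, shows that both quadratic forms have expectation $n$ (since each $Z^k_i = \sum_j \Gamma^k_{ij} Y_j$ is a standardized BLUP residual) and hence cancel, and then reduces the claim to the monotonicity $0 \le \Gamma^1_{ii} \le \Gamma^2_{ii}$ of the diagonal entries of the sparse inverse Cholesky factors, which follows because adding predictors cannot increase the BLUP residual variance. You instead exploit the product-of-exact-conditionals structure of \eqref{vecchiaapprox} directly: the difference of divergences telescopes into a sum over $i$ of terms $E[\log\{p_\theta(Y_i\mid Y_{C_i^2})/p_\theta(Y_i\mid Y_{C_i^1})\}]$, and since $C_i^1 \subset C_i^2$ each term is the conditional mutual information $I(Y_i; Y_{C_i^2\setminus C_i^1}\mid Y_{C_i^1}) \ge 0$. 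Your bookkeeping is sound: the key identity $p_\theta(y_i\mid y_B) = p_\theta(y_i\mid y_A, y_{B\setminus A})$ holds because the Vecchia factors are genuine conditionals of $p_\theta$, and integrability is immediate for nondegenerate Gaussians. What your argument buys is generality and an exact expression for the accuracy gain as a sum of conditional mutual informations — it would apply verbatim to any base density $p_\theta$, not just Gaussian. What the paper's argument buys is the explicit link to the inverse Cholesky factor $\wt{\Gamma}$ and the BLUP interpretation developed just before the theorem, which is reused throughout Sections 3 and 4 (e.g., the identity $E(Y^T(\Sigma^k)^{-1}Y) = n$ and the determinant formula $\det(\Sigma^k)^{-1} = \prod_i (\Gamma^k_{ii})^2$ are of independent use for the grouped computation). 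Either proof would be acceptable here.
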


\begin{proof}
Let $\Sigma^1$ and $\Sigma^2$ be the covariance matrices under $p_{\theta,\tau,J^1}$ and $p_{\theta,\tau,J^2}$. The difference between the KL-divergences is given by
\begin{align*}
KL(p_\theta || p_{\theta,\tau,J^1}) - KL(p_\theta || p_{\theta,\tau,J^2}) = \frac{1}{2}\left( E( Y^T (\Sigma^1)^{-1} Y ) - E( Y^T (\Sigma^2)^{-1} Y ) - \log \det (\Sigma^1)^{-1} + \log \det (\Sigma^2)^{-1} \right).
\end{align*}
Let $\Gamma^k$ be the inverse Cholesky factor of $\Sigma^k$. According to Vecchia's approximation, $\Gamma^k$ is sparse and has entries constructed as follows. Let $A$ be the covariance matrix for the vector
\begin{align*}
( Y_{j_{i1}}, \ldots, Y_{j_{im}}, Y_i)^T,
\end{align*}
and let $C$ be the inverse Cholesky factor of $A$. Row $m+1$ (the last row) of $C$ contains the coefficients for the projection of $Y_i$ onto $(Y_{j_{i1}},\ldots,Y_{j_{im}})$. The $i$th row of $\Gamma^k$ has entries
\begin{align*}
\Gamma^k_{ij_{i\ell}} = C_{m+1,\ell}, \, \, \Gamma^k_{ii} = C_{m+1,m+1}, \, \, \Gamma^k_{ij} = 0 \, \, \mbox{otherwise}.
\end{align*}
Define $Z^k = \Gamma^k Y$. Then $Z^k_i$ is
\begin{align*}
Z^k_i = \sum_{j=1}^i \Gamma^k_{ij} Y_j = \sum_{\ell=1}^{m} C_{m+1,\ell} Y_{j_{i\ell}} + C_{m+1,m+1} Y_i = C_{m+1,m+1}\Big[ Y_i - \sum_{\ell=1}^m \Big( \frac{-C_{m+1,\ell}}{C_{m+1,m+1}} \Big) Y_{j_{i\ell}} \Big].
\end{align*}
From the last expression it is evident that $Z^k_i$ is the residual from the BLUP of $Y_i$ given $(Y_{j_{i1}},\ldots,Y_{j_{im}})$, scaled by the standard deviation of that residual, and so $Z^k_i$ has mean zero and variance 1. This ensures that $E( Y^T (\Sigma^k)^{-1} Y ) = E( \sum_i (Z^k_i)^2 ) = n$, and thus both quadratic form terms are equal to $n$ and cancel each other.

The determinant of $\Sigma_k^{-1}$ can be computed by taking the square of the product of the entries of $\Gamma^k$. Since $J^1_i \subset J^2_i$ for every $i$, we have that $0 \leq \Gamma^1_{ii} \leq \Gamma^2_{ii}$ for every $i$, so $\log \det (\Sigma^1)^{-1} \leq \log \det (\Sigma^2)^{-1}$, and thus $KL(p_\theta || p_{\theta,\tau,J^2}) \leq KL(p_\theta || p_{\theta,\tau,J^1})$, as desired.
\end{proof}

Corollary 1 follows from Theorem 1 when we note that, by definition, $\overline{J}_i$ contains $J_i$, since $i \in B_k$ and all elements of $J_i$ are less than or equal to $i$.

\begin{corollary}\label{groupingtheorem}
The grouped approximation has smaller KL divergence than its ungrouped counterpart, i.e.\
\begin{align*}
KL(p_\theta || p_{\theta,\tau,\overline{J}}) \leq KL(p_\theta || p_{\theta,\tau,J})
\end{align*}
\end{corollary}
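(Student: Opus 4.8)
The plan is to obtain the corollary as an immediate consequence of Theorem \ref{KLtheorem}, so the whole task reduces to verifying the index-by-index containment $J_i \subset \overline{J}_i$ for every $i = 1,\ldots,n$. Once that is in hand, applying Theorem \ref{KLtheorem} with $J^1 = J$ and $J^2 = \overline{J}$ yields $KL(p_\theta || p_{\theta,\tau,\overline{J}}) \leq KL(p_\theta || p_{\theta,\tau,J})$ directly, with no further computation.

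To establish the containment, I would fix $i$ and let $B_k$ be the block of the partition $B$ that contains $i$. Since $U_k = \bigcup_{b \in B_k} J_b$ is a union that includes the set $J_i$ itself, we have $J_i \subset U_k$. It then remains only to check that every element of $J_i$ is at most $i$: by the convention attached to \eqref{vecchiaapprox}, $J_i = \{j_{i1},\ldots,j_{im_i},i\}$ with each $j_{i\ell} \leq i-1$, so all of its elements satisfy $j \leq i$. Consequently every $j \in J_i$ belongs to $\{j \in U_k : j \leq i\} = \overline{J}_i$, which is exactly $J_i \subset \overline{J}_i$.

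Because this inclusion holds for each $i$, the hypothesis of Theorem \ref{KLtheorem} is satisfied, and the conclusion follows. There is essentially no analytic obstacle here; the one point worth flagging is the bookkeeping in the definition of $\overline{J}_i$, where the truncation to indices $\leq i$ could a priori discard genuine neighbors of $i$. I would emphasize that this truncation is harmless precisely because the original neighbors of $i$ already satisfy $j \leq i$, so none of them is removed, and $\overline{J}_i$ can only acquire additional indices (those neighbors of other members of $B_k$ that happen to be $\leq i$). This monotone enlargement of the conditioning sets is exactly the structure Theorem \ref{KLtheorem} exploits.
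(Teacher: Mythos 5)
Your proposal is correct and matches the paper's own argument: the paper likewise notes that $\overline{J}_i \supset J_i$ because $i \in B_k$ forces $J_i \subset U_k$ and every element of $J_i$ is at most $i$, so Theorem \ref{KLtheorem} applies directly. Your additional remark that the truncation to indices $\leq i$ cannot discard any original neighbor is a nice explicit justification of the step the paper states more tersely.
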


\subsection{Simultaneous Computation}\label{computationaladvantages}

Recall that $U_k$ is the set of all neighbors of observations in block $B_k$. Let $A^{k}$ be the covariance matrix for observations with indices in $U_k$, with observations ordered increasing in their indices. If we define $C^k$ to be the inverse Cholesky factor of $A^k$, each row of $C^k$ encodes the conditional distribution of an observation with index in $U_k$ given those ordered before in $U_k$. Since $\overline{J}_i$ contains precisely those indices ordered before $i$ in $U_k$, we can calculate the contribution of observations with indices $i \in B_k$ to $p_{\theta,\tau,\overline{J}}$ simultaneously, by forming and factoring a single covariance matrix $A^k$. The total computational cost for the grouped version is on the order
\begin{align*}
\mbox{memory: } \sum_{k=1}^K (\# U_k)^2, \quad \mbox{flops: } \sum_{k=1}^K (\# U_k)^3,
\end{align*}
where $(\# U_k)$ is the size of the set $U_k$. If a group of observations share a substantial number of neighbors, it is possible that these quantities are less than the memory and flop burden of the ungrouped approximation--$n(m+1)^2$ and $n(m+1)^3$--and so computational savings can be gained by computing each group's contribution to the likelihood in this simultaneous way.

To illustrate this point, consider the canonical ordering and two observations $y_i$ and $y_{i+2}$ with $J_i = ({i-m},\ldots,i)$ and $J_{i+2} = ({i-m+2},\ldots,{i+2})$, so that each observation conditions on the $m$ previous ones. The computational cost of factoring the covariance matrices for the observations and their respective neighbors is $1/3(m+1)^3$ floating point operatations each, for a total of $2/3(m+1)^3$ operations, and $(m+1)^2$ memory units each, for a total of $2(m+1)^2$ memory units.
Now, suppose that $B_1 = \{i,i+2\}$, and thus $U_1 = \{i-m,\ldots,i+2\}$, $\overline{J}_i = J_i$, $\overline{J}_{i+2} = \{i-m,\ldots,{i+2}\}$, and $A^1$ is the $(m+3) \times (m+3)$ covariance matrix for $(y_{i-m},\ldots,y_{i+2})$. Row $m+1$ of $C^1$ encodes the conditional distribution of $y_i$ given $(y_{i-m},\ldots,y_{i-1})$, and row $m+3$ of $C^1$ encodes the conditional distribution of $y_{i+2}$ given $(y_{i-m},\ldots,y_{i+1})$, which is what is required for calculating the contribution from $y_i$ and $y_{i+2}$ to the grouped likelihood approximation. The computational cost of storing and factoring $A^1$ is $(m+3)^2$ memory units and $1/3(m+3)^3$ flops, so in addition to the increased accuracy of the approximation (Corollary 1), we also achieve computational savings in flops if $m>6$ and in memory if $m>3$.
\vskip12pt

\subsection{Grouping Algorithm}

Since the groupings are a partition of the set of observations, it is of interest to find good partitions that improve the model and lessen computational cost. Here, we describe a fast greedy algorithm that depends only on the ordering and the neighbor sets. The algorithm starts with $B_k = \{k\}$, each index in its own block and proceeds to propose joining pairs of blocks.
\vskip12pt

\noindent \textbf{Grouping algorithm}:
\vskip6pt

\begin{algorithmic}
\State Set $B = \{ \{1\},\ldots,\{n\}\}$
\For{$\ell = 1,\ldots,m$}
	\For{$i = 1,\ldots,n$}

		\State Identify the $k$ for which $i \in B_k$

		\State Identify the $k'$ for which $j_{i\ell} \in B_{k'}$

		\If{$(\#(U_k \cup U_{k'}))^2 \leq (\# U_k )^2 + (\# U_{k'})^2$}

			\State Set $B_k = B_k \cup B_{k'}$

			\State Set $B_{k'} = \{\}$

		\EndIf

	\EndFor
\EndFor
\end{algorithmic}
\vskip12pt

\noindent The joining of two blocks is accepted if the square of the number of neighbors of the joined block is less then the sum of the squares of the number of neighbors in the two blocks. This ensures that the memory burden is made no worse when combining blocks, since the covariance matrices we need to form are governed by $(\# U_k)^2$. In practice, this rule also does not increase the computational burden since, when working with small covariance matrices, the computational demand is often dominated by filling in the entries of the covariance matrix rather than factoring the matrix. Only neighbors are considered as candidates for joining, ensuring that at most $nm$ comparisons are made, and that we make comparisons between blocks that contain points that are near each other, which is advisable because distant points are unlikely to share many neighbors unless they appear early in the ordering.

\section{Further computational considerations}\label{furthercomputational}

This section describes computational considerations for the problems of finding orderings and nearest neighbors, approximately drawing from unconditional and conditional Gaussian process distributions, and profiling out linear mean parameters. Code for all methods is provided in online supplementary material.

\subsection{Finding Orderings and Nearest Neighbors}

Considering arbitrary permutations of the observations presents two computational issues that are not encountered in coordinate-based orderings. The first is finding the orderings. For example, finding the MMD ordering is $O(n^3)$ in computing time. The second problem is how to find the ordered nearest neighbors, that is, the set of $m$ nearest neighbors to location $x_{\tau(i)}$ among $x_{\tau(1)},\ldots,x_{\tau(i-1)}$. A naive algorithm for finding neighbors would require $O(n\log n)$ operations to compute and sort distances to any point, and thus a total of $O(n^2 \log n)$ operations.

Searching for nearest neighbors in metric spaces is a well-studied problem. \cite{vaidya1989ano} describes an $O(n\log n)$ algorithm for finding the nearest neighbor to all $n$ points. The FNN R package \citep{FNN} has an implementation of a kd-tree algorithm. The algorithm uses a tree-based partitioning of the points to quickly narrow down the set of points that are candidates for being nearest each point. However, the FNN software cannot be used directly because the functions return nearest neighbors from the set of all points, rather than nearest previous neighbors, which is what we need for Vecchia's approximation. To bridge this gap, we first find the nearest $2m$ neighbor sets for each point. For the neighbor sets that contain at least $m$ previous points, we have found the nearest $m$ previous points. Those points are set aside, and for the remaining points, we find the nearest $4m$ points, again setting aside the points that have at least $m$ previous neighbors. This is repeated until all points previous neighbors have been found.

It is possible to construct approximations to the MMD ordering that run in 
$O(n\log n)$ time.
First divide the domain up into grid boxes, with the number of grid boxes proportional to the number of observations. Assigning each point to its grid box can be done in linear time with a simple rounding method. Using the center of each grid box as its location, then use MMD to order the grid boxes. Once the grid boxes are ordered, loop over the grid boxes according to the grid box ordering, each time picking the point from the current grid box with largest minimum distance to any selected points in the current or neighboring grid boxes, stopping when all points have been ordered. If the number of grid boxes $g$ is large this algorithm can be used recursively to order the grid boxes, thus giving an $O(n\log n)$ algorithm.

\subsection{Unconditional and Conditional Draws}\label{computational}

Whereas Kriging interpolation requires only matrix multiplication and linear solves with the covariance matrix, which can usually be done in $O(n^2)$ operations with a good iterative solver, unconditional and conditional draws (simulations) from the Gaussian process model generally require us to factor the covariance matrix, which is $O(n^3)$ operations. One advantage that Vecchia's approximation has is that it defines an approximation to the inverse Cholesky factor, which can be used to perform approximate draws from the Gaussian process model. This approximation can also be used to perform conditional draws of unobserved values given the observations, which, when done many times in an ensemble, is a useful way of quantifying the joint uncertainty associated with interpolated maps.

To simplify the following equations, write $\Gamma$ as the inverse Cholesky factor of the $n\times n$ covariance matrix $\Sigma$, and $\widetilde{\Gamma}$ as the approximation to $\Gamma$ implied by Vecchia's likelihood. Let $Z$ be an uncorrelated vector of standard normals of length $n$. Then $Y = \widetilde{\Gamma}^{-1} Z$ is approximately $N(0,\Sigma)$. We use Kullback-Leibler divergence in Section \ref{numerical} to monitor the quality of the approximations. Since $\widetilde{\Gamma}$ is sparse and triangular, solving for $Y$, and thus drawing approximately from from $N(0,\Sigma)$, can be done in $O(n)$ operations.
Suppose $Y$ is parititioned into two subvectors as $Y = (Y_1,Y_2)$ and write $\Sigma$ as a $2\times2$ block matrix $\{ \Sigma_{ij} \}_{i=1,2}$. To predict $Y_2$ from $Y_1$, We compute $E(Y_2|Y_1) = \Sigma_{21}\Sigma_{11}^{-1}Y_1$. This can be rewritten in terms of the inverse Cholesky factor as $E(Y_2|Y_1) = -\Gamma_{22}^{-1}\Gamma_{21}Y_1$. Thus we can compute approximate conditional expectations as $-\widetilde{\Gamma}_{22}^{-1}\widetilde{\Gamma}_{21} Y_1$. Solving a linear system with $\widetilde{\Gamma}_{22}$ can also be achieved in linear time because $\widetilde{\Gamma}_{22}$ is sparse and lower triangular.

Conditional draws can be done at the cost of one unconditional draw and one conditional expectation. Suppose that the data are in the vector $Y_1$, and that $Y^* = (Y_1^*,Y_2^*)$ is an unconditional draw from $N(0,\Sigma)$. Then, conditionally on $Y_1$,
$-\widetilde{\Gamma}_{22}^{-1}\widetilde{\Gamma}_{21}(Y_1 - Y_1^*) + Y_2^*$
approximately has mean $\Sigma_{21}\Sigma_{11}^{-1}Y_1$ and covariance matrix $\Sigma_{22} - \Sigma_{21}\Sigma_{11}^{-1}\Sigma_{12}$ and thus is an approximate conditional draw of $Y_2$ given $Y_1$. We use ensembles of conditional draws in Section \ref{datasection} to quantify joint uncertainties in
interpolations of satellite data.

\subsection{Profile Likelihood with Linear Mean Parameters}

The Gaussian process model often includes a mean function that is linear with respect to a set of spatial covariates. \cite{stein2004approximating} outlined methods for computing residual (also known as restricted) likelihoods to avoid having to numerically maximize the approximate likelihood over the mean parameters. If one prefers to obtain the maximum approximate likelihood estimates of all parameters, it is usually computationally advantageous to profile out the mean parameters. Writing $E(Y) = X\beta$, where $X$ is the $n \times p$ design matrix, and $\beta$ is the vector of linear mean parameters, Vecchia's approximate likelihood becomes
\begin{align*}
\log p_{\theta,\beta,\tau,J}(y) = -\frac{n}{2}\log(2\pi) + \log \det \widetilde{\Gamma} - \frac{1}{2}\| \widetilde{\Gamma}( y - X\beta) \|^2.
\end{align*}
For fixed $\theta$, the maximum approximate likelihood estimate for $\beta$ is
\begin{align*}
\widehat{\beta}(\theta) = ( X^T \widetilde{\Gamma}^T \widetilde{\Gamma} X )^{-1} X^T \widetilde{\Gamma}^T \widetilde{\Gamma} y = \left[ (\widetilde{\Gamma}X)^T (\widetilde{\Gamma}X) \right]^{-1} (\widetilde{\Gamma}X)^T (\widetilde{\Gamma}y).
\end{align*}
Multiplying $\widetilde{\Gamma}X$ can be done column-by-column, each of which takes the same computational effort as multiplying $\widetilde{\Gamma}y$, which is the most computationally demanding task in the approximate likelihood evaluation. Thus profiling out the linear mean parameters can be done at the roughly the cost of $p$ additional approximate likelihood evaluations per iteration when maximizing over $\theta$.

\section{Numerical and timing comparisons}\label{numerical}

This section contains numerical results studying differences in the relative quality Vecchia's approximation with respect to choices for the permutation and grouping, under different model settings and in two, three, and four dimensional domains. Vecchia's approximation is also compared to a simple block independent Gaussian process approximation, since this approximation is easily constructed and has been shown to be competitive with state-of-the-art approximations \citep{stein2014limitations}. Comparisons are also made with stochastic partial differential equation (SPDE) approximations \citep{lindgren2011explicit} and tapered covariance approximations \citep{furrer2006covariance,kaufman2008covariance}. For all comparisons, the Mat\'ern covariance function is used,
\begin{align*}
M(r; \sigma^2, \alpha, \nu) = \frac{\sigma^2}{\Gamma(\nu)2^{\nu-1}}\left( \frac{r}{\alpha} \right)^\nu \mathcal{K}_\nu\left( \frac{r}{\alpha} \right),
\end{align*}
which has emerged as the model of choice for practitioners of spatial statistics \citep{guttorp2006studies}. The covariance function has three positive parameters $\sigma^2$, $\alpha$, and $\nu$, which are the variance, range, and smoothness parameters, respectively. The mean of the process is assumed to be known to be zero.

In the two-dimensional numerical studies, six different parameter choices are presented, with $\sigma^2 = 1$ in all six, while the range and smoothness cover all combinations of $\alpha \in \{ 0.1, 0.2 \}$ and $\nu \in \{1/2,1,3/2\}$. The locations form an $80\times 80$ regular grid on the unit square, giving 6400 locations. Example realizations from the six models are given in Figure \ref{simexamples} in Appendix \ref{appendixexamples}. Four orderings are considered: sorted coordinate, middle out, completely random, and AMMD. Ungrouped and grouped versions, and 30 and 60 nearest neighbors are considered. As mentioned in Section \ref{grouping}, calculations in the grouped versions effectively condition on more neighbors, so when discussing the results, ``grouped version with 30 neighbors'' refers to a grouped version of a likelihood approximation that initially conditioned on 30 neighbors. For each ordering and number of initial neighbors, Table \ref{group_statistics} includes statistics on the number of blocks $K$, the sizes of the blocks $(\# U_k)$, and the sizes of the conditioning sets $(\# \overline{J}_i)$.

\begin{table}
\centering
\begin{tabular}{ccccccc}
Ordering & $m$ & $K$ & mean $(\# U_k)$ & max $(\# U_k)$ & mean $(\# \overline{J}_i)$ & max $(\# \overline{J}_i)$ \\
\hline
MMD & 30 & 1406 & 51.35 & 124 & 55.83 & 123 \\
random & 30 & 1361 & 50.82 & 114 & 53.01 & 113 \\
coordinate & 30 & 788 & 58.95 & 171 & 66.02 & 170 \\
middle out & 30 & 773 & 60.33 & 154 & 60.35 & 153 \\
\hline
MMD & 60 & 754 & 116.46 & 291 & 133.36 & 290 \\
random & 60 & 714 & 118.46 & 308 & 129.74 & 307 \\
coordinate & 60 & 358 & 136.83 & 450 & 166.49 & 449 \\
middle out & 60 & 320 & 145.27 & 418 & 155.87 & 417
\end{tabular}
\caption{\label{group_statistics} For each ordering scheme and for both initial number of neighbor choices $m$, statistics on the number of blocks returned by the grouping algorithm $K$, the size of the neighbor sets in each group $(\# U_k)$, and the actual number of neighbors $(\# \overline{J}_i)$. Here $k = 1,\ldots,K$, and $i = 1,\ldots,n$.}
\end{table}

The combinations above consist of (6 parameter settings) $\times$ (4 orderings) $\times$ (30 vs.\ 60 neighbors) $\times$ (ungrouped vs.\ grouped) $=$ 96 settings, not counting the block approximations or the SPDE approximations. Exploring such a large number of scenarios with simulations would not be feasible since many replicates would be required to control Monte Carlo error size. Instead, we use two deterministic criteria to evaluate the different approximations: (1) KL-divergence from the implied approximate model to the target model, (2) asymptotic relative efficiency for estimating covariance parameters, computed using the the usual generalization of Fisher information for misspecified likelihoods \citep{heyde2008quasi}. If $\widetilde{\ell}(\theta)$ is the approximate likelihood, the information criterion is
\begin{align*}
H(\theta) =  E \Big( \nabla^2 \widetilde{\ell}(\theta) \Big) E\Big( \big(\nabla \widetilde{\ell}(\theta) \big) \big(\nabla \widetilde{\ell}(\theta) \big)^T \Big)^{-1} E \Big( \nabla^2 \widetilde{\ell}(\theta) \Big),
\end{align*}
where $\nabla$ indicates gradient, and $\nabla^2$ is the Hessian. The criterion $H(\theta)$ is of course equal to the Fisher information in the case that $\widetilde{\ell}$ is equal to the true loglikelihood. The inverse of $H$ gives the asymptotic covariance matrix for the three Vecchia likelihood parameter estimates, estimated simultaneously. Relative efficiencies are obtained by comparing the diagonal values of $H^{-1}$ to the diagonal values of the inverse of the Fisher information.

In the three- and four-dimensional numerical studies, the grids on the unit square are of size $19^3$ and $9^4$, giving totals of 6859 and 6561 locations. We use the exponential covariance function with range parameters $0.1, 0.2$, and $0.4$ Only the KL-divergence criterion and the ungrouped approximations are considered, but the same four orderings and up to 240 neighbors are used.

The section concludes with a timing study showing how the various computational tasks for Vecchia's approximation scale with the number of observations. All timing comparisons are done on a 2016 Macbook Pro with a 3.3GHz Intel Core i7 processor (two cores) with 16GB RAM. Code is mostly written in the R programming language \citep{Rcoreteam}, aside from the Vecchia likelihood functions, which have been implemented in C++. Computations are done while running R version 3.4.2, linked to Apple Accelerate multithreaded linear algebra libraries.

\subsection{Kullback-Leibler Divergence Comparisons}

For the $\nu=1$ cases, an SPDE approximation \citep{lindgren2011explicit} is also considered. The SPDE approximation provides computational savings by using a sparse inverse covariance matrix and is valid for integer values of the smoothness parameter in two spatial dimensions. To address edge effects in the SPDE approximation, we use the boundary extension described in \cite{lindgren2011explicit}. Several choices of boundary parameters were considered, with only small differences among choices. For the $\nu = 1/2$ case, we also consider a tapered covariance approximation \citep{furrer2006covariance,kaufman2008covariance} with three tapering ranges as a competitor. In contrast to Vecchia's approximation, the KL divergence from the SPDE or tapered approximations to the target model is not minimized at the target model's parameter values, so we perform an optimization to find the range and variance parameters to minimize the KL-divergence. SPDE computations were implemented using functions from the R-INLA package \citep{rue2009approximate}, tapering from the fields package \citep{fieldspackage}.

The KL divergence results are plotted in Figure \ref{kldiv}, and the discussion of the results is organized according to the smoothness parameter setting. When $\nu = 1/2$, which corresponds to an exponential covariance function, AMMD ordering reduces KL divergence relative to sorted coordinate ordering by factors of 16 and 22 for the two range parameter settings when the number of neighbors is 30. Both orderings run in roughly the same amount of time. When the observations are grouped, the approximations improve. When using the AMMD ordering, the improvement in KL divergence over the sorted coordinate ordering with no grouping increases to a factor of 64 when $\alpha = 0.1$ and a factor of 75 when $\alpha = 0.2$. For 60 neighbors the effect of ordering is more striking; AMMD ordering and grouping results reduces KL divergence by factors of 285 and 244.
Covariance tapering is not competitive for any of the tapering ranges. The grouped approximation with AMMD ordering and 30 neighbors is more than 12,000 times more accurate than the tapered approximation that runs in roughly the same amount of time.

\begin{figure}
\centering
\includegraphics[width=0.8\textwidth]{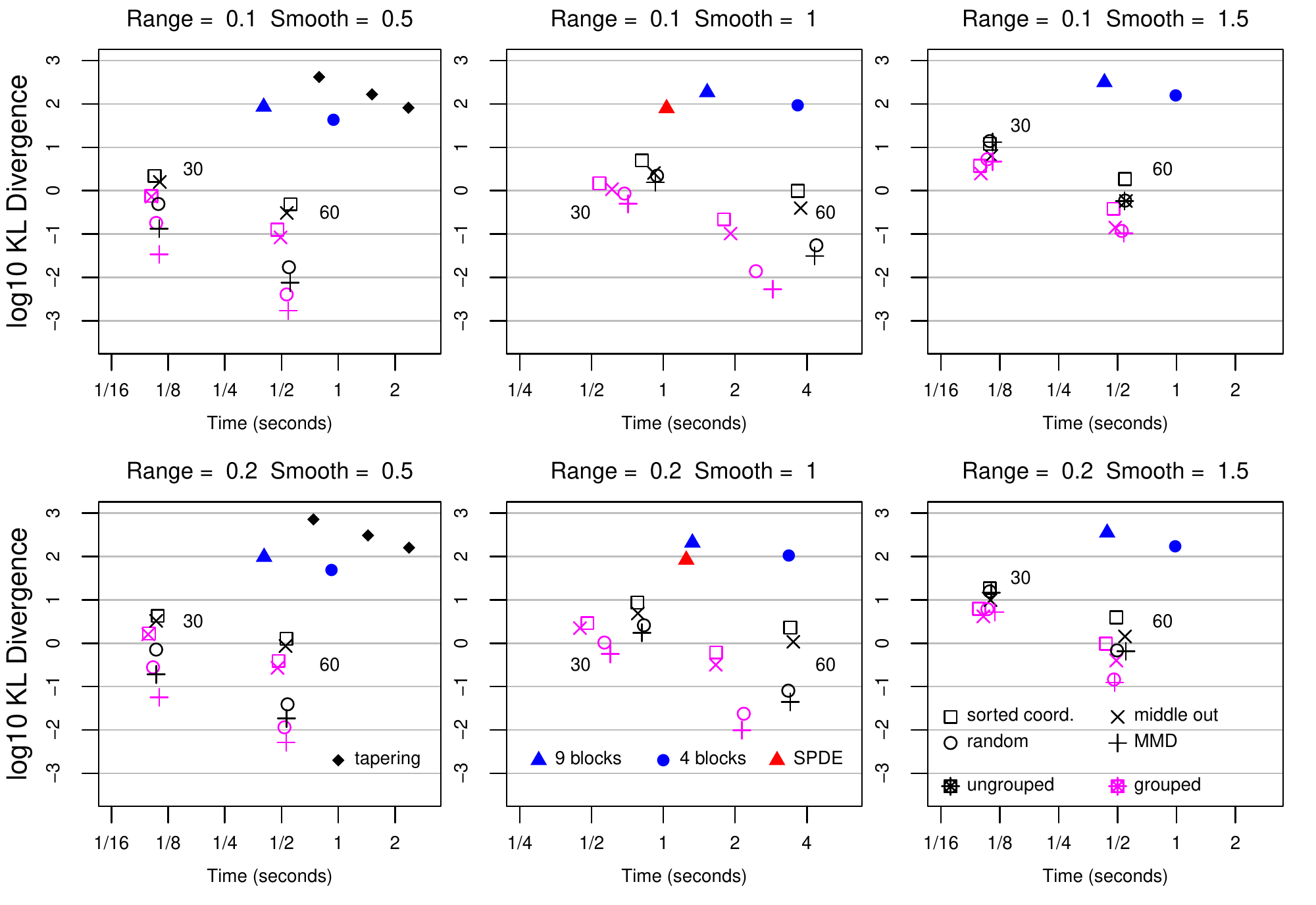}
\caption{{\small KL-divergences for Mat\'ern covariances with range $\in \{0.1,0.2\}$, and smoothness $\in \{1/2,1,3/2\}$. Locations form $80 \times 80$ grid on the unit square. Number of neighbors (30 and 60) indicated next to ungrouped symbols.}}
\label{kldiv}
\end{figure}

When $\nu = 1$, the major story is that the grouped approximations with AMMD ordering run faster than the SPDE approximation and are far more accurate; for $\alpha = 0.1$, KL-divergence is reduced by a factor of 160, and for $\alpha = 0.2$ by a factor of 148.
Computing times for the SPDE approximation include creating the mesh, forming the precision matrix (with boundary extension), and computing the sparse Cholesky factor. Reordering of the precision matrix is done by internal functions in the R Matrix package \citep{matrixpackage}. As before, the AMMD orderings are most accurate, especially with 60 neighbors.

When $\nu = 3/2$, which is the smoothest model considered, the AMMD ordering is again most accurate for 60 neighbors.  For 30 ungrouped neighbors, the differences among orderings is less substantial. In all cases, grouped approximations are more accurate than their ungrouped counterparts. In all parameter settings, the block independent approximations are never competitive in accuracy. The $3\times 3$ block approximation generally runs slower than the fastest grouped approximation and is orders of magnitude less accurate. Finer blocking in the block independent approximation will run faster but will decrease in accuracy.

KL-divergences for three and four dimensions are plotted in Figure \ref{kldiv3D4D}, where we see that the middle out ordering is the best choice in all but two covariance settings. In three dimensions, the random and MMD orderings appear to have KL-divergences that decay faster as the number of neighbors increase, whereas the sorted coordinate orderings have a slower decay. However, the gains for random and MMD come only after a large number of neighbors, which are unlikely to be used in practice due to the cubic scaling in number of neighbors. In four dimensions, middle out performs best. KL divergences were also computed for covariance tapering with three different taper ranges. Covariance tapering was not competitive with Vecchia's approximation in any setting; KL divergences were larger than $10^2$ in all settings, with some larger than $10^3$.

\begin{figure}
\centering
\includegraphics[width=0.8\textwidth]{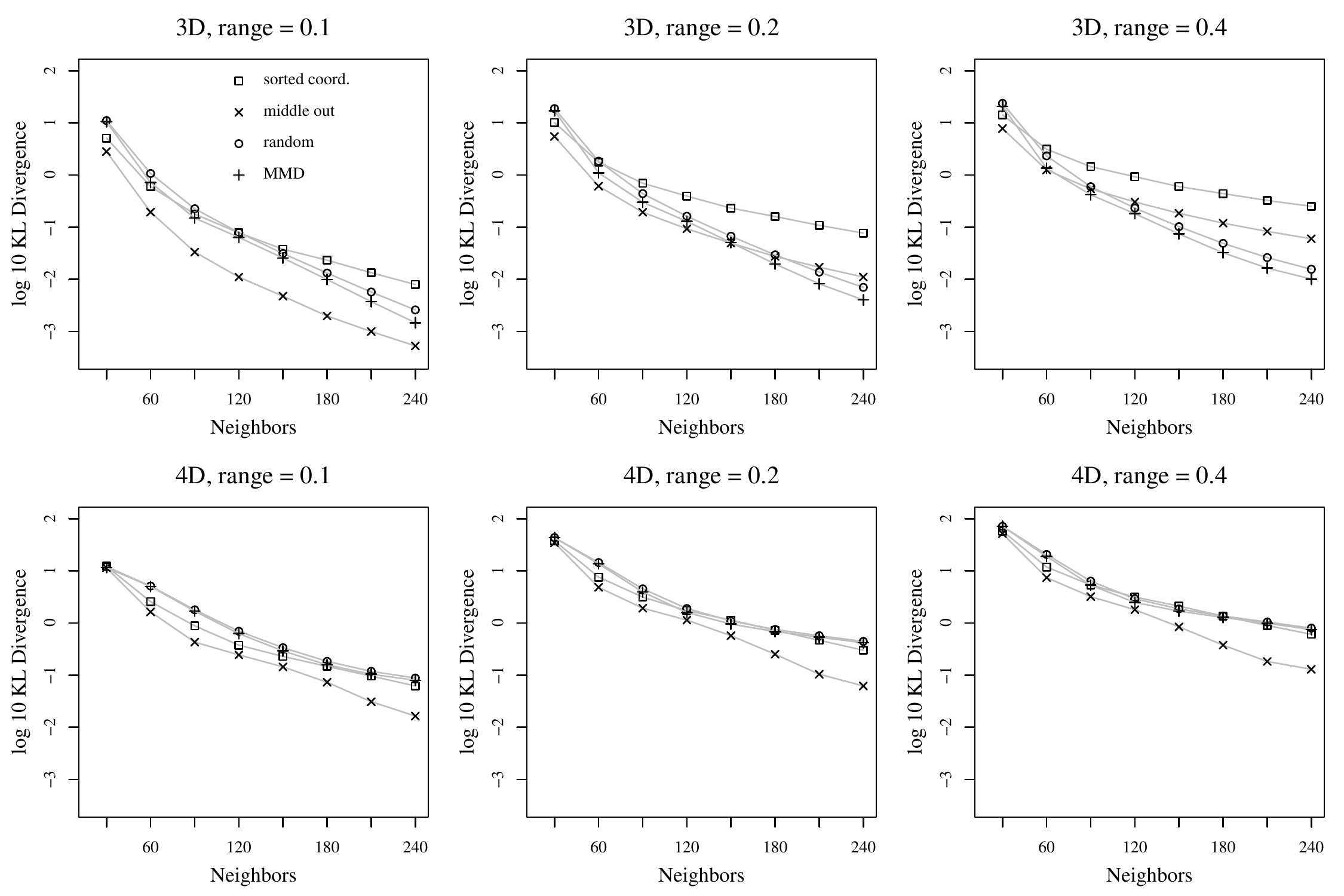}
\caption{{\small KL-divergences as a function of ordering and number of nearest neighbors for ungrouped version of Vecchia's approximation with exponential covariance. Locations form a regular grid in 3 and 4 dimensions, totaling 6859 and 6561 locations.}}
\label{kldiv3D4D}
\end{figure}

\subsection{Relative Efficiency Comparisons}

\begin{figure}
\centering
\includegraphics[width=0.8\textwidth]{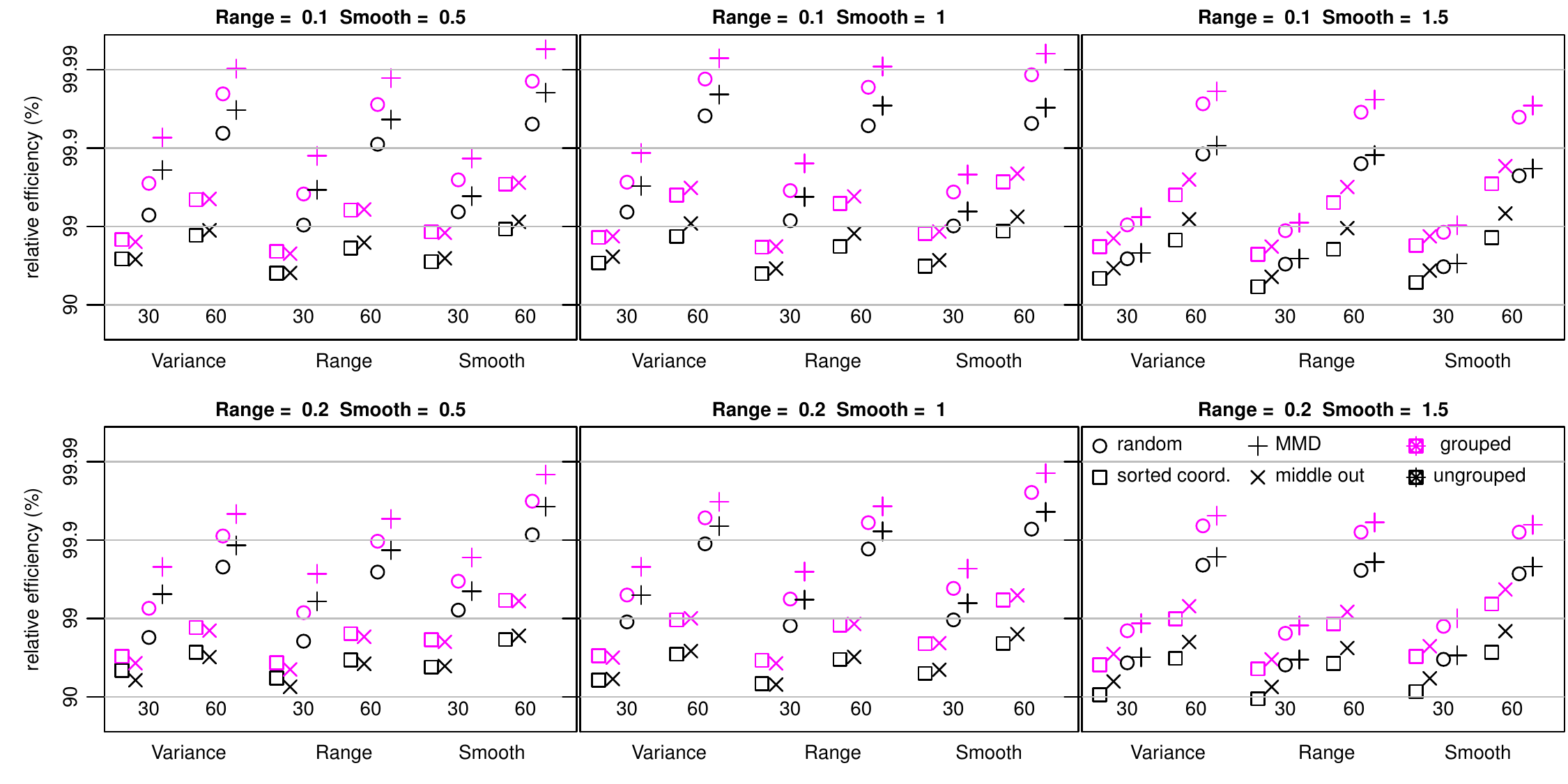}
\caption{{\small Relative efficiencies for estimating three Mat\'ern covariance parameters, variance $ =2$, range $\in \{0.1,0.2\}$, and smoothness $\in \{1/2,1,3/2\}$. Locations form $80 \times 80$ grid on the unit square, and the four orderings considered are sorted coordinate, middle out, completely random, and AMMD. Results for ungrouped (black) and not grouped (magenta) are presented.}}
\label{releff}
\end{figure}

The results of the numerical study on relative efficiency are summarized in Figure \ref{releff}. In every parameter setting, for all three parameters, and for both neighbor sizes, the AMMD ordering with grouping outperforms the default sorted coordinate ordering without grouping. In some cases, the difference is quite large. For example, when estimating the range parameter for the Mat\'ern with smoothness 1 and range 0.2, using the default sorted coordinate without grouping has 93.2\% relative efficiency, whereas AMMD with grouping has 99.7\% relative efficiency. This gain in efficiency comes at no additional computational cost.

In every case, using 30 neighbors with the AMMD ordering and grouping is superior to using 60 neighbors with the default sorted coordinate ordering without grouping. Since the computational complexity scales with the cube of the number of neighbors, this means our proposed improvements to Vecchia's likelihood can achieve increases in relative efficiency while simulataneously reducing the computational cost by a factor of 8. Finally, perhaps the most surprising result of this numerical study is that the default orderings are almost always outperformed by a completely random ordering of the points. This is also true in the KL divergence study. In both cases, the AMMD ordering offers further improvement.

\subsection{Timing}

In the context of analyzing data, it is useful to note which calculations must be carried out just once versus multiple times, and which calculations can be parallelized, although no explicit parallelization has been used in these studies. The ordering, grouping, and nearest neighbor calculations are computed just once. The likelihood calculations generally need to be repeated many times in the process of either numerically maximizing it with respect to covariance parameters, or sampling from posterior distributions in MCMC algorithms, but likelihood evaluations are embarassingly parallel due to the separability of Vecchia's specification. The nearest neighbor searches can be done in parallel.

Figure \ref{timing_increasing_n} presents the results of the timing study for an increasing number of observations with the exponential covariance function and AMMD ordering. The study is carried out for 30 and 60 neighbors and up to $10^5$ observations. For 30 neighbors, the slowest operation is grouping, followed by ordering, then finding neighbors, then the likelihood evaluations. The grouped likelihood for $10^5$ observations was computed in 1.5 seconds. For 60 neighbors, the grouping algorithm was slowest, followed by ordering. The grouped likelihood with 60 neighbors required 7.4 seconds for $10^5$ observations.

\begin{figure}
\centering
\includegraphics[width=0.7\textwidth]{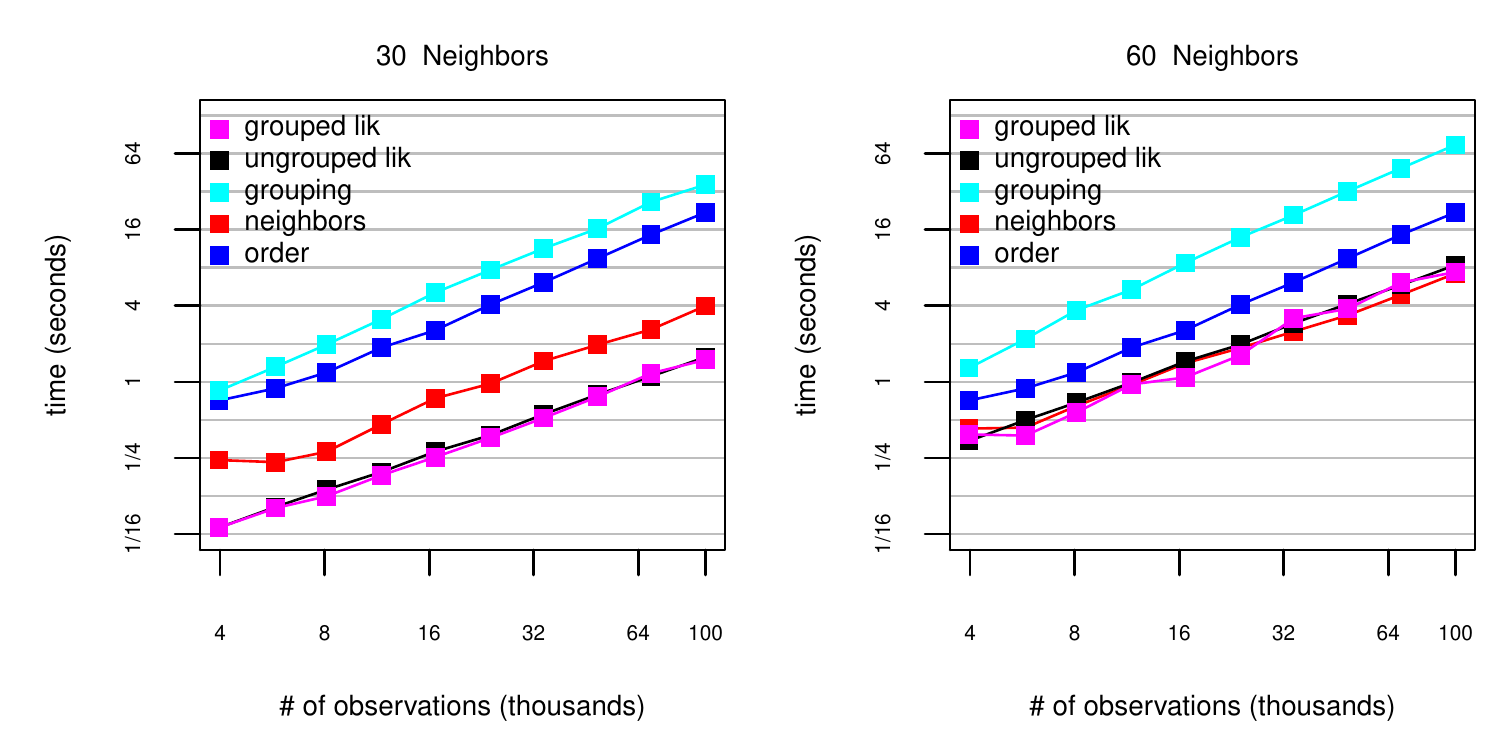}
\caption{{\small Timing Results for increasing number of observations regularly spaced on a square grid.}}
\label{timing_increasing_n}
\end{figure}

\section{Jason-3 Satellite Wind Speed Observations}\label{datasection}

Launched in January 2016, the Jason-3 satellite is the latest in a series of satellites equipped with radar altimeters for measuring ocean surface height, wave height, and ocean wind speed \citep{jason3website}. Jason-3 orbits the earth every 112 minutes along a path that repeats every 9.9 days. The data we consider are ocean surface wind speed values reported roughly once per second between August 4 and August 9, 2016 and are available at \url{http://www.nodc.noaa.gov/sog/jason/}. The goal of this analysis is to create interpolated maps of wind speed and quantify the joint uncertainty in the interpolations.

\textit{Data Preprocessing}: The Jason-3 data files include rain and ice flags to signify the presence of liquid water and ice along the radar signal's path. These disruptions degrade the quality of the signal, and the Jason-3 products handbook \citep{jason3products} states that flagged measurements should be ignored. As a conservative measure, we discard any measurement taken within 30 seconds of passing over rain or ice. In order to ensure that we can perform extensive comparisons among statistical analyses for data spanning a reasonably long time period, we average the one second measurements within 10 second intervals, discarding any intervals that have missing values. The resulting data vector has 18,973 values. A map of the data is plotted in Figure \ref{jason3data}.

\begin{figure}
\centering
\includegraphics[width=0.8\textwidth]{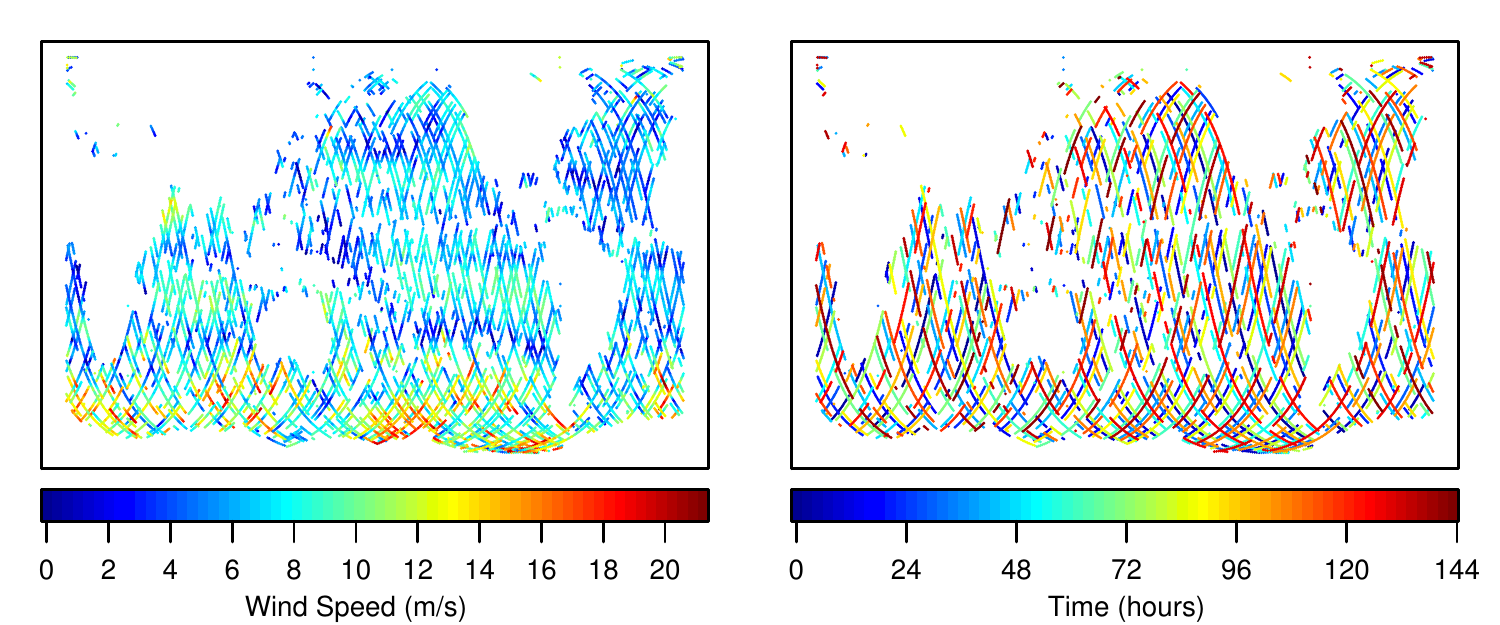}
\caption{{\small Jason-3 wind speed values and observation times.}}
\label{jason3data}
\end{figure}

Since the satellite can measure wind speeds at only a single location at any given time, it is expected that uncertainties in interpolated maps at specific times and locations will depend on whether there are nearby observations in space and time. This feature can be captured by modeling the data with a space-time Gaussian process. Specifically, for wind speeds at location $x \in \mathbb{S}^2$ and $t \in \mathbb{R}$, we consider a geostatistical model of the form
\begin{align*}
Y(x,t) = \mu + Z(x,t) + \varepsilon(x,t),
\end{align*}
where $\mu$ is considered nonrandom, $\varepsilon(x,t)$ is uncorrelated $N(0,\tau^2)$ (nugget term), and $Z(x,t)$ is a space-time Gaussian process with a Mat\'ern covariance function
\begin{align*}
\Cov(Z(x_1,t_1),Z(x_2,t_2))  = \frac{\sigma^2}{2^{\nu-1}\Gamma(\nu)}(d_{12})^\nu \mathcal{K}_\nu(d_{12}),
\end{align*}
where $\mathcal{K}_\nu$ is a modified Bessel function of the second kind, and
\begin{align*}
d_{12} = \sqrt{ \frac{ \| x_1 - x_2 \|^2 }{\alpha_1^2} + \frac{|t_1 - t_2|^2}{\alpha_2^2}},
\end{align*}
so that the covariance function is isotropic in space, stationary in time, but with different range parameters for the space and time dimensions. This method for constructing covariance functions on the sphere-time domain was originally used in \cite{jun2007approach}. The Mat\'ern covariance is not generally valid with great circle distance metric on the sphere \citep{gneiting2013strictly}, so Euclidean distance is used. \cite{porcu2016spatio} constructed some alternative space-time covariance functions, but \cite{guinness2016isotropic} argued that there is no reason to expect the use of covariance functions constructed on Euclidean spaces and restricted to the sphere to cause any distortions. The numerical results in \cite{porcu2016spatio} support this idea. The model we consider has five unknown covariance parameters, $(\sigma^2, \alpha_1, \alpha_2, \nu, \tau^2)$.

We consider four orderings for the observations in Vecchia's likelihood. The first is ordered in time, which I consider to be a default choice for this application. The second is completely random, the third is MMD in time, and the fourth is MMD in space. 
Since the orbital pattern of the satellite follows a regular path over time, the MMD in time ordering provides good spatial coverage early in the ordering, as can be seen in Figure \ref{jason_first_1000}, which shows the locations of the first 1000 observations from the MMD in time ordering.

\begin{figure}
\centering
\includegraphics[width=0.5\textwidth]{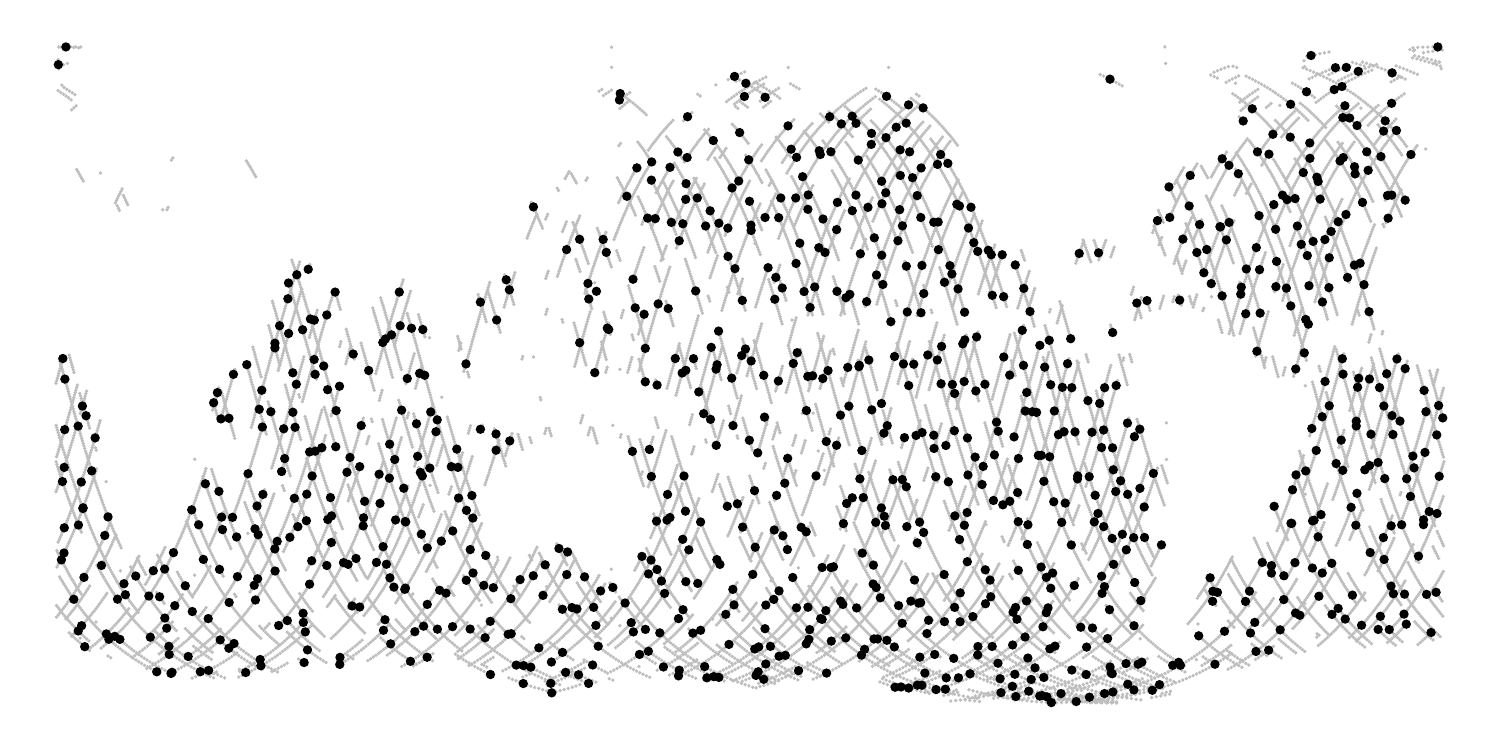}
\caption{{\small For MMD time ordering, first 1000 locations (black) and all locations (gray).}}
\label{jason_first_1000}
\end{figure}

For neighbor selection, \cite{datta2016nonseparable} defined ``nearest'' neighbors in space and time based on the correlation function. Here, we define distance based on the spatial dimension only. The reason for this choice is to encourage the neighbor sets to contain observations from different time points so that the conditional likelihoods contain information about the temporal range parameter.

Since the quality of Vecchia's approximation increases as neighbors are added, and maximum likelihood estimates are obtained in an iterative numerical search procedure, it is natural to consider a sequence of maximum approximate likelihood estimates, where each step in the sequence uses a larger number of neighbors than in the previous step. The estimates from an iterative procedure such as this can be monitored for convergence and stopped when successive parameter estimates do not change beyond some tolerance level. For each ordering, we start with 10 neighbors, find the maximum approximate likelihood estimate, and then we consider 20 neighbors, starting the optimization at the 10-neighbor parameter estimate, and so on up to 100 neighbors. As long as the optimization procedures are finding the legitimate maximum of the approximate likelihoods, the sequence of estimates from all orderings will all converge to the same maximum likelihood estimate, since all of the likelihood approximations converge to the exact likelihood when we condition on all possible past observations. Orderings can be compared to each other based on how many neighbors are required for convergence.

The results for the four orderings are plotted in Figure \ref{jason_parm_convergence}. Only the variance, spatial range, temporal range, and smoothness parameters are plotted because the nugget variance is esentially zero for every ordering and number of neighbors. It is clear that the parameter estimates from the random ordering and the two MMD orderings are converging more rapidly than are the estimates from the time ordering. The MMD orderings appear to be converging slightly more quickly than the random ordering. For the random and MMD orderings, the estimates for all parameters are within 2\% of the 100 neighbor estimates with just 50 neighbors, whereas the time ordered estimates do not appear to have settled down, even with 100 neighbors. The exception is the smoothness parameter, whose estimates for 10 neighbors are within 2\% of the 100 neighbor estimates for all orderings.

\begin{figure}
\centering
\includegraphics[width=0.8\textwidth]{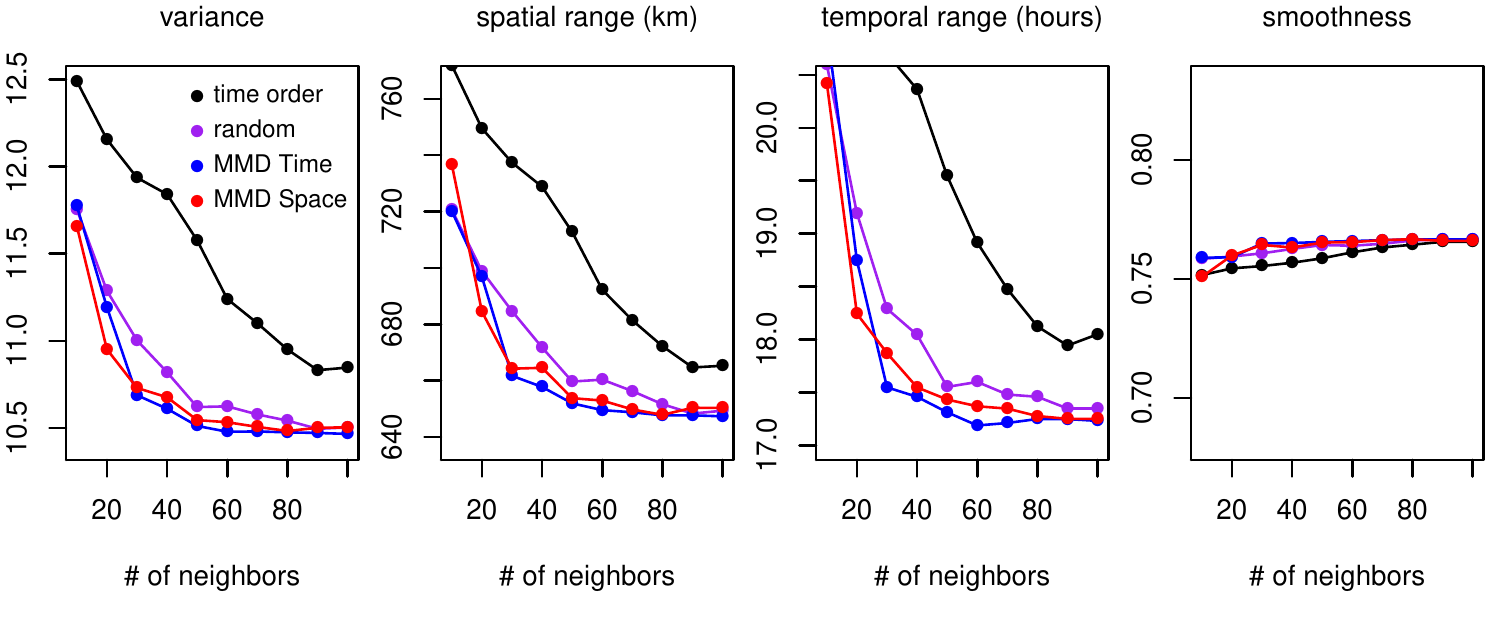}
\caption{{\small Estimated parameters for increasing number of neighbors. Vertical axis heights are 20\% of the value of the last estimate from ordering 3.}}
\label{jason_parm_convergence}
\end{figure}

Spatial interpolations are computed on an evenly spaced grid of size $120\times 240$ in latitude and longitude at two time periods. The first is at the average of all the observation times, to represent a hindcast--a prediction of past values from past data. The second is at the time of the last observation, to represent real-time interpolation of the data. Using ordering 3 for the observations, a random ordering for the prediction points, and 60 nearest neighbors with grouping, we compute 4000 conditional draws and use the sample variances of the conditional draws to estimate the conditional variances at each of the prediction points. The predictions and their standard deviations are plotted in Figure \ref{jason_pred}. As expected, the standard deviations are generally smaller for the hindcast predictions, and one can clearly see the effect of the orbital path of the satellite; locations along recently visited paths have smaller standard deviations. Further, the standard deviations are small at high and low latitudes, reflecting the fact that the satellite passes near the poles on every orbit. Finally, Figure \ref{jason_cond} contains two individual hindcast conditional draws and maps of conditional correlations with two points in the soutern Atlantic Ocean. One can see that the conditional correlation structure is quite complex and inhomogenous among the two points. A possible explanation is that this region has nearby observations in space and time, as can be seen from bottom left panel of Figure \ref{jason_pred}.

\section{Discussion}\label{discussion}

\begin{figure}
\centering
\includegraphics[width=0.8\textwidth]{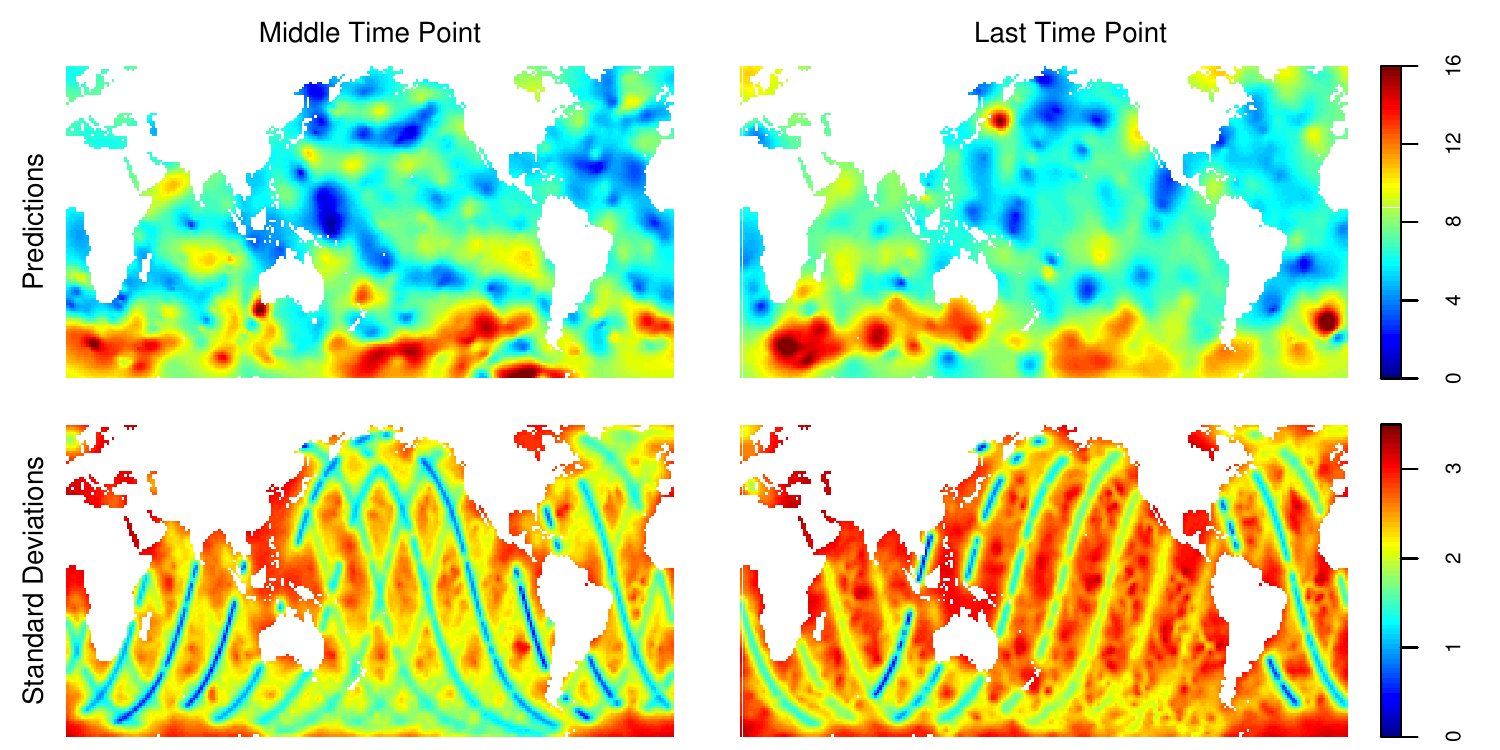}
\caption{{\small Predictions (top row) and simulated prediction standard deviations (bottom row) of windspeed at the mean observation time (left) and the last observation time (right).}}
\label{jason_pred}
\end{figure}

This paper demonstrates that reordering and grouping operations can lead to substantial improvements in the quality of Vecchia's approximation, sometimes by more than two orders of magnitude compared to default ungrouped approximations. Grouping also reduces the computational effort over ungrouped versions, and the grouping algorithm introduced here is general in that it can be applied to any choice of ordering and neighbor sets. I have also provided the R code written for reordering, finding nearest neighbors, automatic grouping, and likelihood evaluations.

One perhaps surprising result of this study is that Vecchia's approximation with MMD ordering and grouping with 30 neighbors runs faster than an SPDE approximation, while being two orders of magnitude more accurate in terms of KL divergence. Certainly more numerical results are needed in more cases, but this result coupled with the fact that Vecchia's approximation is valid for general covariance functions rather than being restricted to Mat\'ern with integer smoothness, suggests that Vecchia's approximation should be considered as a candidate for approximating Mat\'ern models whenever the SPDE approximation is considered.

The paper introduces a grouped version of Vecchia's approximation, based on a partitioning of the observations into blocks. Each block's contribution to the likelihood can be computed simultaneously. A theorem is provided to explain why grouping improves the model approximation, and an algorithm is described for finding a partition that is guaranteed control the memory burden. Exploring new partitioning algorithms for Vecchia's approximation could be a fruitful avenue for future work.

While developing some theory for grouping has been successful, a general theory for reordering remains elusive. One reason for this is the diversity of covariance functions and observation settings to be considered. Results on the screening effect \citep{stein2002screening} may be helpful for pushing the theory forward in some special cases, and perhaps theoretical results on estimates from subsamples could be useful as well \citep{hung2016variable}. However, I caution against drawing strong conclusions based on one-dimensional results. For example, for the exponential covariance in one-dimension, Vecchia's approximation is exact when observations are sorted along the coordinate and just a single nearest neighbor is used. Unsorted orderings require two neighbors for at least one of the points. I suspect that intuition gleaned from these one-dimensional examples has been incorrectly applied to the two-dimensional case, leading to the preference for sorted coordinate orderings in two dimensions. The numerical results in three and four dimensions are an attempt to study how ordering influences Vecchia's approximation in higher dimensions, which is relevant for computer experiment applications \citep{sacks1989design,santner2013design}. There is some evidence that KL divergence decays faster with the number of neighbors under random or MMD ordering in three dimensions, but middle out ordering appears to be best in four dimensions, at least in the examples considered here. This is further evidence to be wary of applying intuition from lower dimensions to higher dimensions.

This work considers the effect of ordering with the rule for choosing neighbor sets held constant. \cite{stein2004approximating} presented examples where including some distant neighbors can help in estimating parameters that control the behavior of the covariance function away from the origin. I have tried this approach but was not able to improve on nearest neighbors. A possible explanation is that I consider the Mat\'ern model with a different parameterization than in \cite{stein2004approximating}. I note here that the likelihood with MMD ordering automatically includes information about distant relationships since observations early in the ordering necessarily condition on distant observations since early observations are necessarily distant from one another.

Nonetheless, exploring the interaction between ordering and neighbor selection would be interesting. The work on the local approximate Gaussian process (laGP)  \citep{gramacy2015local} explores fast automatic neighbor selection in a different context and could be relavent to this question. While neighbors figure prominently in both the present paper and in laGP, neighbors in laGP can be selected from anywhere in the ordering, whereas Vecchia's approximation requires neighbors to come from previous in the ordering. Ordered neighbors ensure that the approximation corresponds to a valid joint density, which is intended to approximate a specified joint density, whereras laGP constructs a framework for interpolation and does not target any particular global model.

\begin{figure}
\centering
\includegraphics[width=0.8\textwidth]{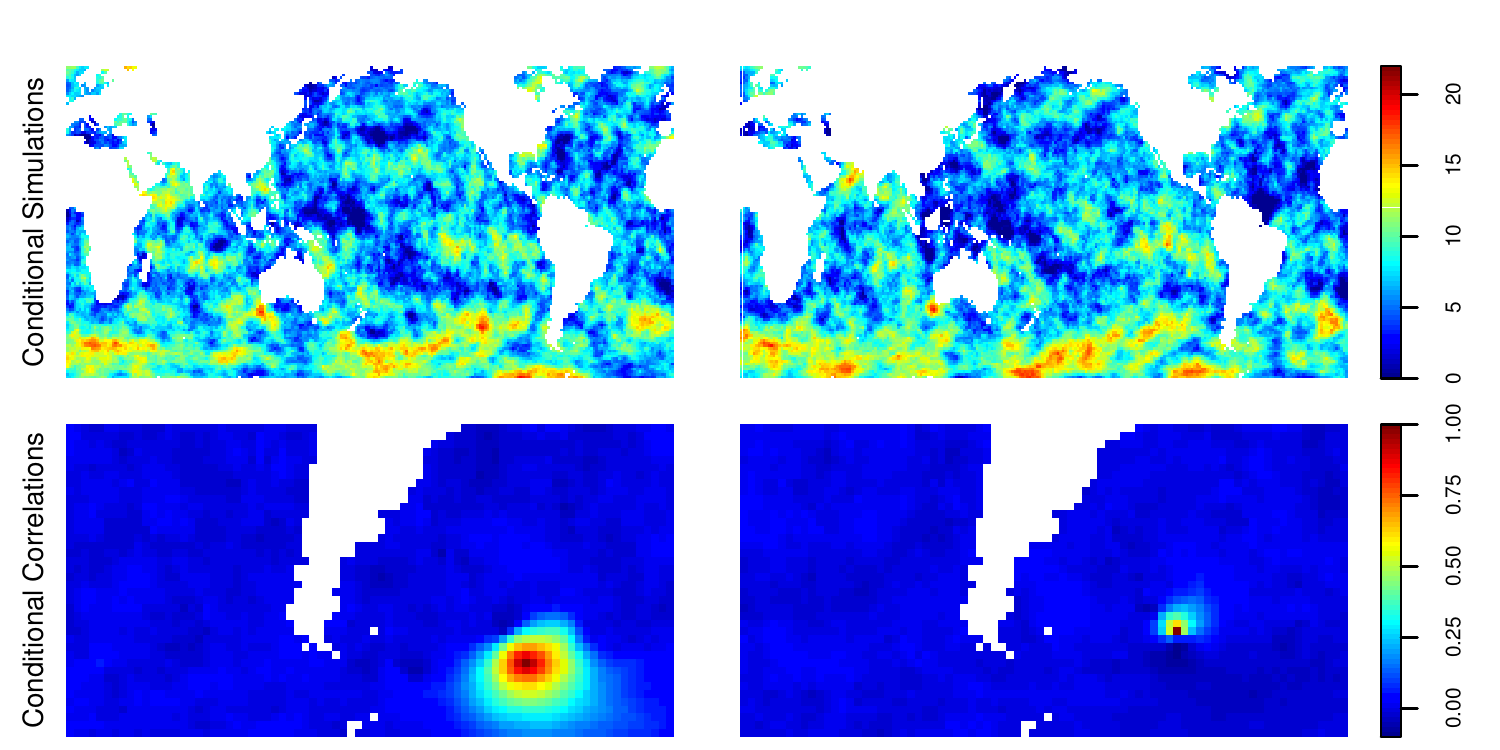}
\caption{{\small For hindcast, two individual conditional draws (top row) and empirical conditional correlations at two points in the southern Atlantic.}}
\label{jason_cond}
\end{figure}

\vskip24pt

\begin{center}{\Large \textbf{Acknowledgements} }\end{center}

This material is based upon work supported by the National Science Foundation under Grant Number 1613219.

\vskip24pt

\bibliographystyle{apa}
\bibliography{refs}

\appendix

\section{Realizations from Models Studied}\label{appendixexamples}

Figure \ref{simexamples} contains example realizations from the models used in the KL-divergence and relative efficiency studies.

\begin{figure}[h]
\centering
\includegraphics[width=0.7\textwidth]{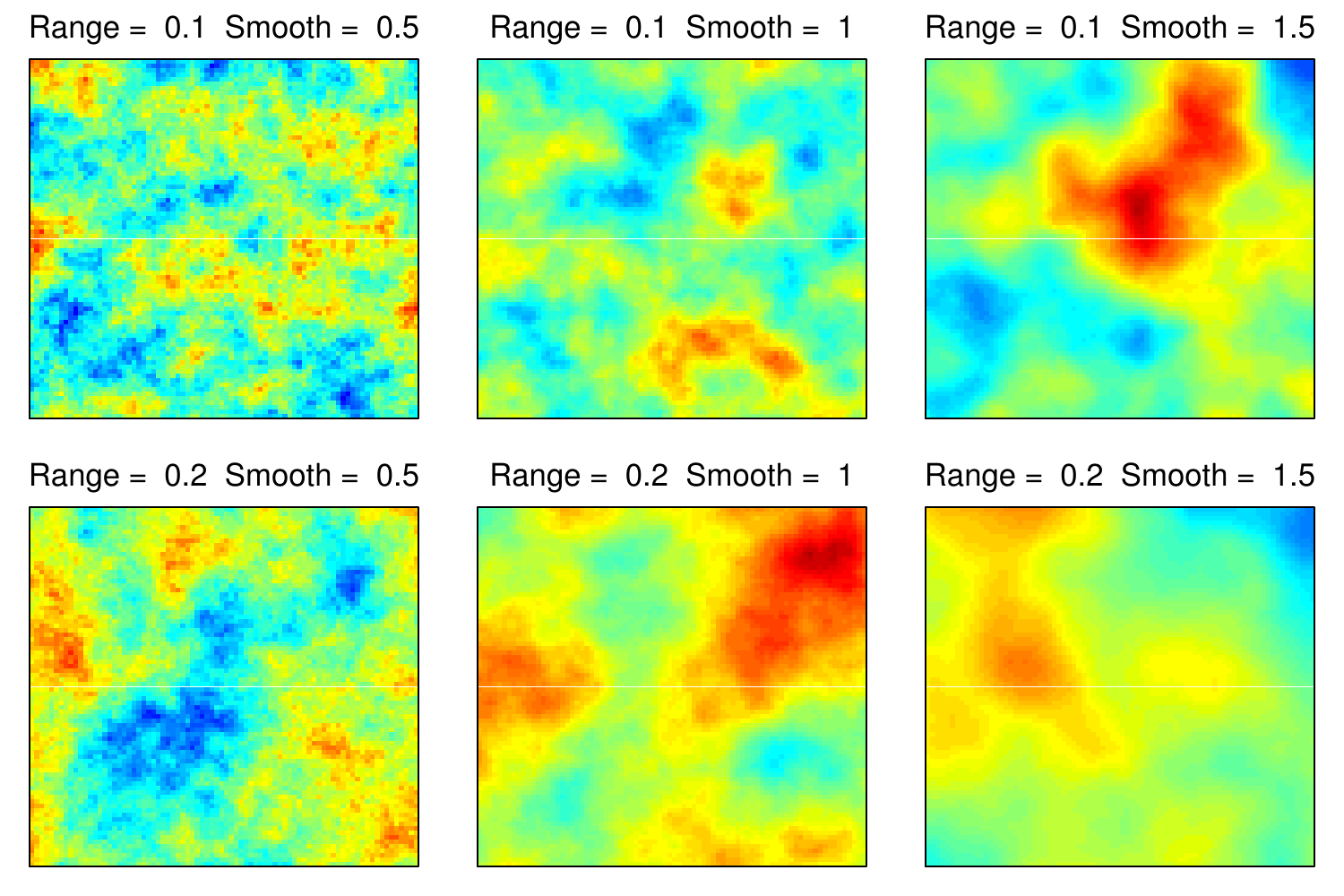}
\caption{{\small Gaussian process realizations at the six \matern\ parameter settings.}}
\label{simexamples}
\end{figure}

\end{document}